\tikzset{every picture/.style={/utils/exec={\sffamily}}}
\newcommand{\red}[1]{\textcolor{red}{#1}}
\newcommand{\blue}[1]{\textcolor{blue}{#1}}
\newcommand{\method}{\texttt{Delta}}
\newcommand{\IRM}{$\text{IR}_{\text{main}}$}
\newcommand{\IRR}{$\text{IR}_{\text{res}}$}
\newcommand{\IRN}{$\text{IR}_{\text{noisy}}$}
\newcommand{\IRQ}{$\text{IR}_{\text{quant}}$}
\newcommand{\MM}{$\mathcal{M}_{\text{main}}$}
\newcommand{\MRS}{$\mathcal{M}_{\text{res}}$}
\newcommand{\MB}{$\mathcal{M}_{\text{bb}}$}
\newcommand{\zM}{$\bm{z}_{\text{main}}$}
\newcommand{\zRS}{$\bm{z}_{\text{res}}$}
\newcommand{\norm}[1]{$\left \| #1 \right \|_F$}
\newcommand{\difffro}[2]{$\frac{\left \| #1-#2 \right \|}{\left \| #1 \right \|}$}
\theoremstyle{plain}
\newtheorem{theorem}{Theorem}
\newtheorem{appxtheorem}{Theorem}
\theoremstyle{definition}
\theoremstyle{remark}
\newtheorem{remark}{Remark}
\newcommand{\attMMI}{\texttt{ML-Leaks}}
\def\eqref#1{equation~\ref{#1}}
\def\1{\bm{1}}
\def\vo{{\bm{o}}}
\def\vu{{\bm{u}}}
\def\vv{{\bm{v}}}
\def\vx{{\bm{x}}}
\DeclareMathAlphabet{\mathsfit}{\encodingdefault}{\sfdefault}{m}{sl}
\SetMathAlphabet{\mathsfit}{bold}{\encodingdefault}{\sfdefault}{bx}{n}
\newcommand{\tens}[1]{\bm{\mathsfit{#1}}}
\def\tI{{\tens{I}}}
\def\tN{{\tens{N}}}
\def\tW{{\tens{W}}}
\def\tX{{\tens{X}}}
\def\tY{{\tens{Y}}}
\def\sR{{\mathbb{R}}}
\newcommand{\etens}[1]{\mathsfit{#1}}
\def\etC{{\etens{C}}}
\def\etU{{\etens{U}}}
\def\etX{{\etens{X}}}
\pgfplotsset{compat = newest}
\tikzset{
    >=stealth',
    port/.style = {circle, draw, align=center, minimum height=1mm},
    op/.style={
           rectangle,
           rounded corners,
           draw=black, thick,
           text width=3.5em,
           minimum height=1em,
           text centered},
    conv/.style={
           rectangle,
           rounded corners,
           draw=black, thick,
           text width=3.5em,
           minimum height=1em,
           text centered},
    data/.style={
           rectangle,
           draw=black, thick,
           minimum height=1em,
           text centered},
    area/.style={
           rectangle,
           draw=black, thick,
           minimum height=1em},
    connect/.style={
           ->,
           thick,
           shorten <=2pt,
           shorten >=2pt,}
}
\definecolor{cvprblue}{rgb}{0.21,0.49,0.74}
\title{All Rivers Run to the Sea: Private Learning with Asymmetric Flows}
\author{Yue Niu$^{1}$\quad
Ramy E. Ali$^{2}$\thanks{R.E. Ali was with the University of Southern California.} \quad
Saurav Prakash$^{3}$\quad
Salman Avestimehr$^{1}$\\
$^{1}$ University of Southern California \quad
$^{2}$ Samsung \quad
$^{3}$ University of Illinois Urbana-Champaign \\
{\tt\small yueniu@usc.edu \quad ramy.ali@samsung.com \quad sauravp2@illinois.edu \quad avestime.usc.edu}
}
\begin{document}
\maketitle
\begin{abstract}
Data privacy is of great concern in cloud machine-learning service platforms, when sensitive data are exposed to service providers. While private computing environments (e.g., secure enclaves), and cryptographic approaches (e.g., homomorphic encryption) provide strong privacy protection, their computing performance still falls short compared to cloud GPUs. To achieve privacy protection with high computing performance, we propose \method{}, a new private training and inference framework, with comparable model performance as non-private centralized training. \method{} features two asymmetric data flows: the main information-sensitive flow and the residual flow. The main part flows into a small model while the residuals are offloaded to a large model. Specifically, \method{} embeds the information-sensitive representations into a low-dimensional space while pushing the information-insensitive part into high-dimension residuals. To ensure privacy protection, the low-dimensional information-sensitive part is secured and fed to a small model in a private environment. On the other hand, the residual part is sent to fast cloud GPUs, and processed by a large model. To further enhance privacy and reduce the communication cost, \method{} applies a random binary quantization technique along with a DP-based technique to the residuals before sharing them with the public platform. We theoretically show that \method{} guarantees differential privacy in the public environment and greatly reduces the complexity in the private environment. We conduct empirical analyses on CIFAR-10, CIFAR-100 and ImageNet datasets and ResNet-18 and ResNet-34, showing that \method{} achieves strong privacy protection, fast training, and inference without significantly compromising the model utility.
\end{abstract}    

\section{Introduction}
\label{sec:intro}
In the current machine learning (ML) era, cloud ML services with high-end GPUs have become indispensable. 
On the other hand, ensuring data privacy is one of the most critical challenges in the ML platforms. 
During training, privacy breaches may occur if training data is exposed to ML service providers, increasing vulnerability to potential attacks. 
Additionally, users' inference queries can also be susceptible to attacks when accessing ML services with sensitive data \cite{ChatGPT, Midjourney}. 
In particular, an untrusted ML platform can cache, learn, and leak queries without users' awareness \cite{ChatGPTLeaking}. 

\textbf{Related Works Overview}. While prior privacy-preserving machine learning (PPML) frameworks \cite{DLDP_SP2016, DLFHE_NIPS2016, DarkNet_MobiSys2020, FL_AISTAT2017, PriSM_TMLR_2023, Flash_TMLR_2023} mitigate privacy concerns in training and inference, they also come with different tradeoffs.
Differential privacy (DP) based methods perturb the data before outsourcing to an untrusted cloud \cite{Shredder_ASPLOS2020, ML_Obfuscation} to ensure privacy, but
they usually result in degraded model utility even under moderate privacy constraints \cite{DPACC_NIPS2019}.
On the other hand, crypto-based techniques \cite{DLFHE_NIPS2016, DarkNet_MobiSys2020}, that provide data protection with encryption schemes, have not yet proved efficient and scalable to large models due to their prohibitive complexities. 

PPML with private environments (e.g., trusted execution environments (TEEs), local environments) presents a promising solution by physically isolating the running computing environments. 
Such private environments are, however, usually resource-constrained compared to public cloud services with high-end GPUs, resulting in lower computing performance \cite{SecureTF_Mid2020, Capsule_CVPR2021}. 
However, the prior works based on leveraging these private environments along with the  public GPUs also incur high complexity \cite{Slalom_ICLR_20, legrace_PETS_2022, Origami_CLOUD_2021, DarkNight_MICRO_2021}. One reason for these methods being inefficient is the heavy communication between the private and the public environments. 

\begin{figure*}[!htb]
    \centering
    \begin{subfigure}{.2\textwidth}
        \centering
        \begin{tikzpicture}
            \node[](USER){\includegraphics[width=7mm]{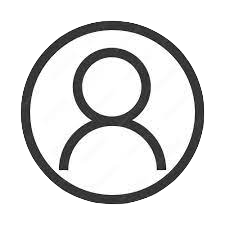}};
            \node[above=.1mm of USER](DATA){\includegraphics[width=5mm]{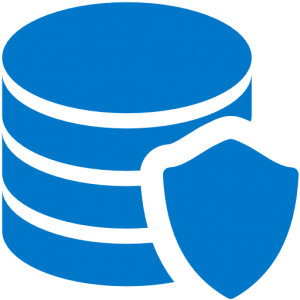}};
            \node[above=.1mm of DATA](MODEL){\includegraphics[width=5mm]{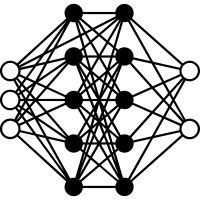}};
            \node[op, right=15mm of USER, draw=pink!50, fill=pink!30, text width=10mm](PRIV){\scriptsize Private Env};
            \node[op, above=10mm of PRIV, draw=black!50, fill=gray!20, text width=10mm](PUB){\scriptsize Public Env};
            \node[right=0.1mm of PRIV](){\includegraphics[width=6mm]{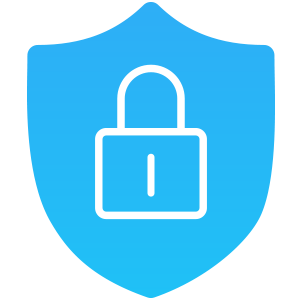}};
            \node[right=0.1mm of PUB](){\includegraphics[width=6mm]{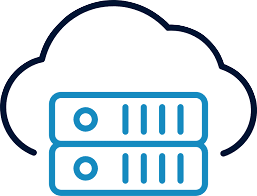}};

            \draw[->, red!50, thick] (USER.east) to node[above] {\scriptsize Private} (PRIV.west);
            \draw[->, black!50] (USER.north east) to [out=30, in=240] node[above, xshift=-1mm] {\scriptsize Public} (PUB.south west);
            \draw[->, black!50, dashed] (PRIV.north) to (PUB.south);
            \draw[->, black!50, dashed] (PUB.south) to (PRIV.north);
        \end{tikzpicture}
        \caption{\footnotesize Problem Setup. }
        \label{fig:overview:problem}
    \end{subfigure}
    \hfill
    \begin{subfigure}{.7\textwidth}
    \centering
    \begin{tikzpicture}
        \node[](dog){\includegraphics[width=.07\textwidth]{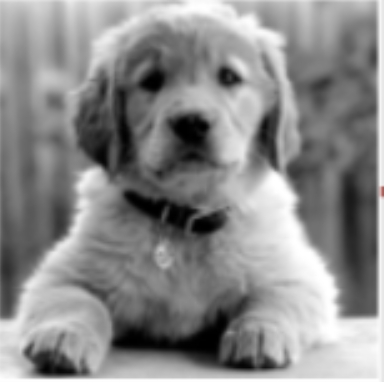}};
        \node[op, draw=pink!50, fill=pink!30, text width=8mm, right=3mm of dog](MB){\scriptsize \MB{}};
        \node[op, draw=pink!50, fill=pink!30, text width=12mm, right=3mm of MB](IR){\scriptsize Decompose};
        \node[op, draw=pink!50, fill=pink!30, text width=8mm, right=12mm of IR](MM){\scriptsize \MM{}};
        \node[op, draw=pink!50, fill=pink!30, text width=0.1mm, minimum height=8mm, right=5mm of MM, label={[text depth=0ex,rotate=-90]center: \tiny logits}](logits1){};
        \node[op, draw=pink!50, fill=pink!30, text width=6mm, right=8mm of logits1](ADD){\scriptsize Add};
        \node[op, draw=pink!50, fill=pink!30, text width=10mm, right=2mm of ADD](SOFTMAX){\scriptsize \texttt{SoftMax}};
        \node[data, draw=pink!50, fill=pink!30, text width=6mm, right=2mm of SOFTMAX](PRED){\scriptsize Pred};
        \node[above=0.1mm of MM, xshift=-3mm](dog2){\includegraphics[width=4mm]{fig/dog.png}};

        \draw[->, red!50, thick] (dog.east) to (MB.west);
        \draw[->, red!50, thick] (MB.east) to (IR.west);
        \draw[->, red!50, thick] (IR.east) to node[below] {\tiny \IRM{}} (MM.west);
        \draw[->, red!50, thick] (MM.east) to (logits1.west);
        \draw[->, red!50, thick] (logits1.east) to (ADD.west);
        \draw[->, red!50, thick] (ADD.east) to (SOFTMAX.west);
        \draw[->, red!50, thick] (SOFTMAX.east) to (PRED.west);

        \node[op, draw=pink!50, fill=pink!30, text width=3mm, right=1mm of IR, yshift=8mm](DP){+};
        \node[op, draw=pink!50, fill=pink!30, text width=6mm, above=1mm of DP](QUANT){\scriptsize Quant};
        \node[op, draw=black!40, fill=gray!20, text width=10mm, right=10mm of IR, yshift=24mm, minimum height=8mm](MR){\scriptsize \MRS{}};
        \node[op, draw=black!40, fill=gray!20, text width=0.1mm, minimum height=8mm, right=3mm of MR, label={[text depth=0ex,rotate=-90]center: \tiny logits}](logits2){};
        \node[left=8mm of MR](nois){\includegraphics[width=5mm]{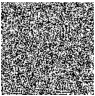}};
        \node[left=2mm of DP](NOISE){\scriptsize noise};

        \draw[->, red!50] (IR.east) to [out=0, in=270] node[above, xshift=-2mm] {\tiny \IRR{}} (DP.south);
        \draw[-, red!50] (DP) to (QUANT);
        \draw[-, red!50](QUANT.north) -- +(0mm, 1.5mm);
        \draw[->, black!50] (QUANT.north) +(0mm, 1.5mm) to [out=90, in=180] node[above, xshift=-1mm] {\tiny \IRQ{}} (MR.west);
        \draw[->, black!50] (MR.east) to (logits2.west);
        \draw[->, black!50] (logits2.east) to [out=0, in=180] (ADD.west);
        \draw[->, red!50, dashed] (NOISE) to (DP);

        \node[rounded corners, draw=red!30, thick, double, text width=120mm, minimum height=24mm, xshift=52mm, yshift=6mm]{};
        \node[above=3mm of dog](LOCAL){\includegraphics[width=6mm]{fig/protect.png}};
        \node[above=15mm of dog](CLOUD){\includegraphics[width=6mm]{fig/cloud.png}};
        \node[right=-1mm of LOCAL, align=left]{\tiny \textbf{Private (TEEs, local env)}};
        \node[right=-1mm of CLOUD, align=left]{\tiny \textbf{Public (cloud GPUs)}};
        
    \end{tikzpicture}
    \caption{\footnotesize An overview of \method{}.}
    \label{fig:overview:delta}
    \end{subfigure}
    \caption{\footnotesize Overview of \method{}: the backbone \MB{} acts as a feature extractor. The features are decomposed into low-dimensional (information-sensitive) and high-dimensional (residual) parts: \IRM{} and \IRR{}. \IRM{} is fed to a small model \MM{}, while \IRR{} are outsourced to a large model \MRS{}. \MB{} and \MM{} run in a resource-constrained private environment, whereas \MRS{} is offloaded to a public environment while ensuring privacy through a DP scheme. While only the forward pass is shown, backpropagation is also private (See Sec \ref{sec:method:backprop}).}
    \label{fig:overview}
    \vspace{-4mm}
\end{figure*}

\textbf{Proposed Solution Overview.} In this paper, we \emph{ consider a private training and inference setting where users can access both private and public environments}, as shown in Figure \ref{fig:overview:problem}. We propose a new private training and inference framework, \method{}, that achieves strong privacy protection with high model utility and low complexity.
The core idea of \method{} originates from an observation that the intermediate representations (IRs) in ML models exhibit an \emph{asymmetric} structure.
Specifically, the primary sensitive information is usually encoded in a low-dimensional space, while the high-dimensional residuals contain very little information. \\
Inspired by such an observation, we design a two-way training framework that respectively learns low-dimensional but information-sensitive features \IRM{} with a \emph{small} model \MM{}, and learns the high-dimensional residuals \IRR{} with a \emph{large} model \MRS{}, as illustrated in Figure \ref{fig:overview:delta}.
Given a model, \method{} selects a few front layers as a backbone (\MB{}) for extracting features and generating intermediate representations.
\method{} uses singular value decomposition (SVD) and discrete cosine transformation (DCT) to extract the low-dimensional information-sensitive representation \IRM{} and the residuals \IRR{}.
We design a new low-dimensional model (\MM{}) to learn the information-sensitive \IRM{}.
While \MB{} and \MM{} run in a private environment, the rest of the model (\MRS{}) learns residuals in a public environment.
\method{} further applies DP perturbation and binary quantization on the residuals, leading to further privacy protection and a communication reduction. \\
Owing to the asymmetric structure in IR, \method{} guarantees differential privacy on \IRR{} with much smaller noise added compared to the naive scheme that directly adds noise to IR (naive-DP), leading to much-improved model utility.
In both training and inference phases, \method{} ensures that only the residual part is public, while the information-sensitive part is secured in private environments. 

\method{} is a generic PPML solution that can be flexibly deployed in several scenarios. 
For instance, in a cloud ML platform with private TEEs and public GPUs \cite{MSEnclave, AWSEnclave}, \MB{} and \MM{} can run inside TEEs to preserve privacy, while \MRS{} runs in GPUs to speed up the computations.
In general distributed settings, \method{} can let clients train \MB{} and \MM{} locally, while a cloud server performs side training with \MRS{}. In summary, our contributions are as follows. 

1. We propose a PPML framework for training and inference with a much-improved privacy-utility trade-off compared to the naive-DP methods, a low computing complexity in the private environments and a low communication cost. 

2. We design an asymmetric decomposition layer that extracts the low-dimensional information-sensitive representations using SVD and DCT, and design a low-complexity model for resource-constrained computing environments.
    
3. We provide a formal differential privacy analysis for the proposed framework. In addition, we show empirically that \method{} provides strong privacy protection against model inversion and membership inference attacks. 

4. We conduct comprehensive evaluations, showing that \method{} leads to a better privacy-utility trade-off than the naive DP-based method that directly adds noise to IR. Specifically, under the same privacy budget, \method{} improves the accuracy by up to $31\%$. Moreover, \method{} greatly reduces the running time compared to other PPML solutions.

\section{System Model}\label{sec:model} 

We start by describing the problem setting, the threat model, and our notations. 

\textbf{Problem Setting.}
We consider a setting where users have private resource-constrained environments (e.g., cloud TEEs, local CPUs/GPUs), but they can also access public, untrusted, and high-end services (e.g., cloud GPUs) to accelerate training and inference. The goal is to protect users' training and inference data, while maintaining computing performance and high model utility.
Note that this setting differs from DP-SGD \cite{DLDP_SP2016, ScaleDP_PETS2021, GEP_ICLR_2021}, where the goal is to protect the training data from attacks on gradients or models. 

\textbf{Threat Model.}
Our threat model is similar to the model considered in prior works leveraging private environments \cite{legrace_PETS_2022,Slalom_ICLR_20,SecureTF_Mid2020}. Specifically, we assume that the private environment is protected against all untrusted, unauthorized access to data and models inside. However, denial-of-service and side-channel attacks are out of our scope. On the other hand, the untrusted, public environment is semi-honest (honest-but-curious). That is, the untrusted public server follows the training and inference protocol faithfully but may attempt to learn as much as possible from what it receives. 
\begin{remark}
While we only consider the semi-honest model in our work, verifiable computing techniques \cite{thaler2013time,SafetyNets_NIPS_2017,ali2020polynomial} can be incorporated to enhance Delta against malicious parties.
\end{remark}

\textbf{Notations.} 
We denote tensors by capital bold letters as $\tX$, matrices by capital letters as $X$, and vectors by small bold letters as $\vx$.
$\etX^i$ denotes a tensor slice $\tX(i,:)$.
\norm{.} denotes the Frobenius norm or in general the square root of the sum of squares of elements in a tensor.
``$\cdot$" denotes matrix multiplication or in general batch matrix multiplication.
``$\circledast$" denotes a convolution operation.
$X^*$ indicates a transpose.

\section{Asymmetric Structure in IRs}\label{sec:motivation}

We first observe that intermediate representations (IRs) in neural networks (NNs) exhibit highly asymmetric structures in multiple dimensions. 
These asymmetric structures are essential for an asymmetric decomposition that embeds the privacy-sensitive information into low-dimensional space.  

\subsection{Asymmetric Structure in Channel Dimension}
\label{sec:motivation-SVD}
In NNs such as convolutional neural networks (CNNs), each layer's input and output consists of multiple channels, denoted as $\tX\in\sR^{c\times h\times w}$, where $c$ is the number of channels and $h$ and $w$ denote the height and the width.

We analyze the channel correlation by first flattening $\tX$ as $X\in\sR^{c\times hw}$ and then performing singular value decomposition (SVD) on the flattened tensor $X$ as
\begin{center}
$X = \sum\limits_{i=1}^c s_i \cdot \vu_i \cdot \vv_i^*$, 
\end{center}
where $s_i$ is the $i$-th singular value, $\vu_i \in \mathbb R^{c}$ and $\vv_i \in \mathbb R^{hw}$ are the $i$-th left and right singular vectors, respectively. 
We reshape $\vv_i$ to the original dimensions as a \emph{principal channel} $V^i\in\sR^{h\times w}$, and $\vu_i$ as a tensor $\etU^i\in\sR^{c\times 1\times 1}$. \\
We then obtain a \underline{l}ow-\underline{r}ank representation of $\tX$ as follows
\begin{center}
    $\tX_{\text{lr}} = \sum\limits_{i=1}^r s_i \cdot \etU^i \cdot V^i$,
\end{center} 
where $r$ denotes the number of principal channels in $\tX_{\rm{lr}} \in \mathbb R ^{c \times h \times w}$.
Figure \ref{fig:motivation:a} shows the normalized error \difffro{\tX}{\tX_{\text{lr}}} versus $r$ after the first convolutional layer in ResNet-18 (based on ImageNet). 
We observe that $\tX_{\text{lr}}$ with a small $r$ is sufficient to approximate $\tX$.
That is, most information in $\tX$ can be embedded into $\tX_{\text{lr}}$ in a low-dimensional space.
We notice that \texttt{3LegRace} \cite{legrace_PETS_2022} also investigated such a property. However, unlike \texttt{3LegRace}, we leverage the low-rank property to design a new low-complexity model (See Sec. \ref{sec:method:model}). 

\subsection{Asymmetric Structure in Spatial Dimension}
\label{sec:motivation-DCT}
The asymmetric structure of the IR also exists over the spatial dimension due to correlations among pixels in each channel $\etX^i\in\sR^{h\times w}$.
We use discrete cosine transform (DCT) to analyze the spatial correlation. Specifically, we obtain frequency components using $t\times t$ block-wise DCT \cite{BookSigProc} as,
\begin{equation*}
\etC^i = \text{DCT}(\etX^i, T) = \left \{ \text{ } \etC^i_{k,j} = T \cdot \etX^i_{k,j} \cdot T^* \text{ } \right \}_{k,j=1}^{h/t, w/t},
\end{equation*}
where $\etC^i_{k,j}\in\sR^{t \times t}, \etX^i_{k,j} = \etX^i ( \text{ } kt-t:kt,jt-t:jt \text{ } )$ and $T\in\sR^{t\times t}$ is the DCT transformation matrix. 
$\etC^i\in\sR^{h\times w}$ is obtained by simply concatenating $\etC^i_{k,j}$ for $k \in {1, ..., h/t}, j \in {1, ..., w/t}$. Then, we obtain a \underline{l}ow-\underline{f}requency representation, $\etX^i_{\text{lf}}$, using inverse $t' \times t'$ block-wise DCT as,
\begin{equation*}
\etX^i_{\text{lf}} = \text{IDCT} (\etC^i_{\text{lf}}, T_{\text{lf}}) = \left \{ \text{ } \etX^i_{\text{lf},k,j} = T^*_{\text{lf}} \cdot \etC^i_{\text{lf},k,j} \cdot T_{\text{lf}} \text{ } \right \}_{k,j=1}^{h/t, w/t},
\end{equation*}
where $t'<t$, $\etX^i_{\text{lf}}\in\sR^{\frac{h}{t}t' \times \frac{w}{t}t'}$, $\etC^i_{\text{lf},k,j} = \etC^i_{k, j}(0:t', 0:t')$, and $T_{\text{lf}}\in\sR^{t' \times t'}$ is the DCT matrix. Note that $t'^{2}$ represents the number of low-frequency components in $\etX_{\text{lf},k,j}^i$.

\begin{figure}[!htb]
    \centering
    \begin{subfigure}{.45\textwidth}
        \centering
        \begin{tikzpicture}
        \begin{axis}[
            xmin = 0.5, xmax = 8.5, ymin = -0.05, ymax = 0.15,
            xlabel = {\scriptsize Fraction of principal channels in $\tX_{\text{lr}}$},
            ylabel = {\scriptsize Error}, 
            xtick = { 2, 4, 6, 8 },
            xticklabels = {$12.5\%$, $25\%$, $37.5\%$, $50\%$},
            xticklabel style={ align=center,text width=14mm },
            ytick = {0, 0.1 }, yticklabels = {0, 0.1 },
            ymajorgrids, major grid style={line width=.1pt, draw=black!30, dashed},
            width=.9\textwidth, height=.35\textwidth, line width=0.1pt,
            ticklabel style={font=\tiny},
            ]
    
            \addplot+[ 
            thick, blue, smooth,
            mark=*, mark options={solid, scale=0.75},
            error bars/.cd,
              y fixed,
              y dir=both,
              y explicit
            ] 
            table [x=x, y=y, y error plus=error1, y error minus=error2, col sep=comma]{
                x,      y,      error1,     error2
                1,      0.119,   0,       0
                2,      0.047,   0,       0
                3,      0.016,   0,       0
                4,      0.003,   0,       0
                5,      0.001,   0,       0
                6,      0.001,   0,       0
                7,      0.001,   0,       0
                8,      0.001,   0,       0
            };
            \node[fill=white, draw=black!30, rounded corners] at (axis cs:7.5,0.08){\tiny {$\frac{\left \| \tX - \tX_{\text{lr}} \right \|}{\left \| \tX \right \|}$}};
        \end{axis}
        \end{tikzpicture}
        \caption{\footnotesize Error of the low-rank approximation vs. $r/c \times 100 \%$.}
        \label{fig:motivation:a}
    \end{subfigure}

    \vspace{1mm}
    
    \begin{subfigure}{.45\textwidth}
        \centering
        \begin{tikzpicture}
        \begin{axis}[
            xmin = 0.5, xmax = 8.5,  ymin = -0.05, ymax = 0.15,
            xlabel = {\scriptsize Fraction of low-freq components in $\tX_{\text{lf}}$},
            ylabel = {\scriptsize Error}, 
            xtick = { 2, 4, 6, 8 },
            xticklabels = {$8\%$, $18\%$, $32\%$, $50\%$},
            xticklabel style={ align=center,text width=14mm },
            ytick = {0, 0.1 }, yticklabels = {0, 0.1 },
            ymajorgrids, major grid style={line width=.1pt, draw=black!30, dashed},
            width=.9\textwidth, height=.35\textwidth, line width=0.1pt,
            ticklabel style={font=\tiny},
            ]
    
            \addplot+[ 
            thick, blue, smooth,
            mark=*, mark options={solid, scale=0.75},
            error bars/.cd,
              y fixed,
              y dir=both,
              y explicit
            ] 
            table [x=x, y=y, y error plus=error1, y error minus=error2, col sep=comma]{
                x,      y,      error1,     error2
                1,      0.105,   0,       0
                2,      0.072,   0,       0
                3,      0.058,   0,       0
                4,      0.042,   0,       0
                5,      0.022,   0,       0
                6,      0.011,   0,       0
                7,      0.005,   0,       0
                8,      0.001,   0,       0
            };
            \node[fill=white, draw=black!30, rounded corners] at (axis cs:7.5,0.08){\tiny {$\frac{\left \| \tX - \tX_{\text{lf}} \right \|}{\left \| \tX \right \|}$}};
        \end{axis}
        \end{tikzpicture}
        \caption{\footnotesize Error of the low-frequency approximation vs. $(t'/t)^2 \times 100 \%$.}
        \label{fig:motivation:b}
    \end{subfigure}
    \caption{\footnotesize Asymmetric structures along channel and spatial dimension (based on ResNet-18 on ImageNet). Most information in $\tX$ can be embedded into low-rank and low-frequency representations.}
    \label{fig:motivation}
    \vspace{-3mm}
\end{figure}
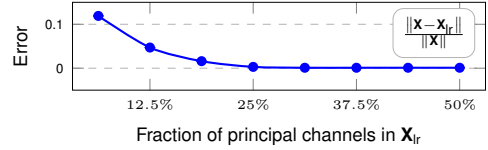
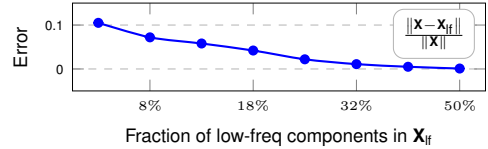

Owing to the spatial correlation, $\tX_{\text{lf}}$ can sufficiently approximate $\tX$ using a few low-frequency components. 
Figure \ref{fig:motivation:b} shows the error \difffro{\tX}{\tX_{\text{lf}}} versus the number of low-frequency components in $\tX_{\text{lf}}$ after the first convolution layer in ResNet-18.
We can also observe that most information in $\tX$ can be embedded into $\tX_{\text{lf}}$ with low-frequency components.

The asymmetric structure also exists in other tasks, such as language models (See Appendix \ref{appx:asymmetricNLP}). 
Observing the asymmetric structures in different dimensions, we aim to design an \emph{asymmetric learning} framework that learns privacy-sensitive low-dimensional IR with a low-complexity model in a private environment, while sending residuals to a larger model trained on an untrusted platform. 


\section{\method{}: Private Asymmetric Learning}\label{sec:method}

\textbf{Overview. }
At a high level, as shown in Fig.  \ref{fig:overview}, given a model, \method{} keeps the first few layers in a private environment as a backbone model \MB{} and a main model \MM{}. \method{} then offloads all remaining layers to a public environment as residual model \MRS{}.
Specifically, \method{} uses the backbone model \MB{} as a feature extractor to compute intermediate representations (IRs). \method{} then \emph{asymmetrically} decomposes the IRs, obtaining a low-dimensional information-sensitive part (\IRM{}) and residuals \IRR{}. 
\method{} designs a new low-dimensional model \MM{} for \IRM{} to reduce computation complexity in the private environment. 
On the other hand, the residuals \IRR{} are perturbed and quantized before being outsourced to the untrusted public environment that hosts \MRS{}.
At last, output logits from \MM{} and \MRS{} are added in the private environment, leading to the final predictions. 
These final predictions are not disclosed to the public environment. 
Hence, during training and inference, \method{} allows only minimal residual information to be leaked to the public environment.   

\begin{figure*}[!htb]
    \centering
    \begin{tikzpicture}
        \node[data, draw=purple!40, very thin, fill=purple!10, text width=5mm, minimum height=6mm](IR1){};
        \node[data, draw=purple!40, very thin, fill=purple!10, text width=5mm, minimum height=6mm, below=-6mm of IR1, xshift=1mm](IR2){};
        \node[data, draw=purple!40, very thin, fill=purple!10, text width=5mm, minimum height=6mm, below=-6mm of IR2, xshift=1mm](IR3){};
        \node[data, draw=purple!40, very thin, fill=purple!10, text width=5mm, minimum height=6mm, below=-6mm of IR3, xshift=1mm](IR4){};
        \node[data, draw=purple!40, very thin, fill=purple!10, text width=5mm, minimum height=6mm, below=-6mm of IR4, xshift=1mm](IR){\scriptsize IR};
        \draw [decorate, thin, blue, decoration={brace,amplitude=3pt,mirror}] ($(IR1.south west) + (-.5mm, 0mm)$) -- ($(IR.south west) + (.5mm, 0mm)$) node[midway,yshift=-2mm]{\scriptsize \textcolor{black}{$c$}};

        \node[op, draw=pink!50, fill=pink!30, text width=5mm, right=8mm of IR, minimum height=5mm](SVD){\scriptsize SVD};

        \node[data, draw=purple!40, very thin, fill=purple!10, text width=5mm, minimum height=6mm, right=8mm of SVD](V){};
        \node[data, draw=purple!40, very thin, fill=purple!10, text width=5mm, minimum height=6mm, below=-5mm of V, xshift=1mm](V1){\scriptsize $V^i$};
        \draw[draw=blue]  (V.south west) -- node [below] {\scriptsize $r$} (V1.south west);

        \node[op, draw=pink!50, fill=pink!30, text width=5mm, right=8mm of V, minimum height=5mm](DCT){\scriptsize DCT};

        \node[data, draw=purple!40, very thin, fill=purple!10, text width=5mm, minimum height=6mm, right=8mm of DCT](C){};
        \node[data, draw=purple!40, very thin, fill=purple!10, text width=5mm, minimum height=6mm, below=-5mm of C, xshift=1mm](C1){\scriptsize $C^i$};
        \node[data, draw=purple!70, very thin, fill=purple!40, text width=1mm, minimum height=1mm, below=-6mm of C, xshift=-1.9mm](Cs){};
        \node[data, draw=purple!70, very thin, fill=purple!40, text width=1mm, minimum height=1mm, below=-5mm of C, xshift=-1mm](C1s){};
        \draw[draw=blue]  (C.south west) -- node [below] {\scriptsize $r$} (C1.south west);

        \node[op, draw=pink!50, fill=pink!30, text width=5mm, right=8mm of C, minimum height=5mm](IDCT){\scriptsize IDCT};

        \node[data, draw=purple!40, very thin, fill=purple!10, text width=2mm, minimum height=4mm, right=8mm of IDCT](IRM){};
        \node[data, draw=purple!40, very thin, fill=purple!10, text width=2mm, minimum height=4mm, below=-3mm of IRM, xshift=1mm](IRM1){};
        \draw[draw=blue]  (IRM.south west) -- node [below] {\scriptsize $r$} (IRM1.south west);
        \node[right=1mm of IRM1]{\scriptsize \IRM{} low-rank and smaller!};

        \draw[->, red!50, thick] (IR) to (SVD);
        \draw[->, red!50, thick] (SVD) to (V);
        \draw[->, red!50, thick] ($(V.east) + (1mm, 0mm)$) to (DCT);
        \draw[->, red!50, thick] (DCT) to (C);
        \draw[->, red!50, rounded corners] (C1s.north) -- +(0mm, 3mm) -- +(12mm, 3mm) node[above] {\tiny $C^i_{\text{lf}}$} -| ([xshift=-1mm] IDCT.north);
        \draw[->, red!50, rounded corners] (Cs.north) -- +(0mm, 3mm) -- +(14mm, 3mm) -| ([xshift=0.5mm] IDCT.north);
        \draw[->, red!50, thick] (IDCT) to (IRM);

        \node[data, draw=black!40, very thin, fill=black!10, text width=5mm, minimum height=6mm, above=6mm of SVD, xshift=-4mm](IRRSVD1){};
        \node[data, draw=black!40, very thin, fill=black!10, text width=5mm, minimum height=6mm, below=-6mm of IRRSVD1, xshift=1mm](IRRSVD2){};
        \node[data, draw=black!40, very thin, fill=black!10, text width=5mm, minimum height=6mm, below=-6mm of IRRSVD2, xshift=1mm](IRRSVD3){};
        \node[data, draw=black!40, very thin, fill=black!10, text width=5mm, minimum height=6mm, below=-6mm of IRRSVD3, xshift=1mm](IRRSVD4){};
        \node[data, draw=black!40, very thin, fill=black!10, text width=5mm, minimum height=6mm, below=-6mm of IRRSVD4, xshift=1mm](IRRSVD){};
        \node[left=0.1mm of IRRSVD1](){\scriptsize $\tX_{\text{SVD res}}$};

        \node[data, draw=black!40, very thin, fill=black!10, text width=5mm, minimum height=6mm, above=6mm of DCT, xshift=-1mm](IRRDCT1){};
        \node[data, draw=black!40, very thin, fill=black!10, text width=5mm, minimum height=6mm, below=-6mm of IRRDCT1, xshift=1mm](IRRDCT){};
        \node[left=0.1mm of IRRDCT1](){\scriptsize $\tX_{\text{DCT res}}$};

        \node[op, draw=black!50, fill=black!30, text width=5mm, minimum height=5mm, right=26mm of IRRDCT](ADD){\scriptsize Add};

        \node[data, draw=black!40, very thin, fill=black!10, text width=5mm, minimum height=6mm, right=7mm of ADD](IRR1){};
        \node[data, draw=black!40, very thin, fill=black!10, text width=5mm, minimum height=6mm, below=-6mm of IRR1, xshift=1mm](IRR2){};
        \node[data, draw=black!40, very thin, fill=black!10, text width=5mm, minimum height=6mm, below=-6mm of IRR2, xshift=1mm](IRR3){};
        \node[data, draw=black!40, very thin, fill=black!10, text width=5mm, minimum height=6mm, below=-6mm of IRR3, xshift=1mm](IRR4){};
        \node[data, draw=black!40, very thin, fill=black!10, text width=5mm, minimum height=6mm, below=-6mm of IRR4, xshift=1mm](IRR){};
        \node[right=1mm of IRR]{\scriptsize \IRR{} little information!};
        \draw [decorate, thin, blue, decoration={brace,amplitude=3pt,mirror}] ($(IRR1.south west) + (-.5mm, 0mm)$) -- ($(IRR.south west) + (.5mm, 0mm)$) node[midway,yshift=-2mm]{\scriptsize \textcolor{black}{$c$}};

        \draw[->, black!50] (SVD) to (IRRSVD);
        \draw[->, black!50] (DCT) to (IRRDCT);
        \draw[->, black!50] (IRRDCT) to (ADD);
        \draw[->, black!50, rounded corners] (IRRSVD1.north) -- +(0mm, 3mm) -- +(62mm, 3mm) -| (ADD.north);
        \draw[->, black!50] (ADD) to (IRR1);
        
    \end{tikzpicture}
    \caption{\footnotesize The asymmetric IR decomposition is shown (See Figure \ref{fig:overview} for the whole pipeline). We use SVD and DCT to encode channel and spatial information into a low-dimensional representation, and offload the residuals to public environments. The low-dimensional \IRM{} has fewer channels and smaller sizes but still encode most sensitive information. The residuals \IRR{} have the same dimension as the original IR. }
    \label{fig:IRDecomp}
    \vspace{-4mm}
\end{figure*}
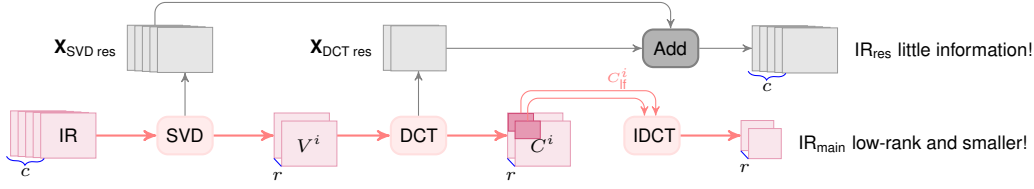

\subsection{Asymmetric IR Decomposition}\label{sec:method:decomp}
As observed in Section \ref{sec:motivation}, IRs after \MB{} exhibit asymmetric structures in multiple dimensions. 
Hence, we can decompose IRs such that most information is encoded in the low-dimensional IR denoted as \IRM{}. \\
\textbf{Singular Value Decomposition (SVD).} Given an IR $\tX\in\sR^{c\times h\times w}$ obtained from the backbone model \MB{}, we first apply SVD as explained in Sec. \ref{sec:motivation-SVD} to obtain the principal channels $\left \{  V^i \in\sR^{h \times w} \right \}_{i=1}^c$ and the corresponding coefficients $\left \{ \etU^i \in\sR^{c\times 1\times 1} \right \}_{i=1}^c$. 
We then select the $r$ most principal channels as a low-rank representation of $\tX$ as in Sec. \ref{sec:motivation-SVD}.
The rest of the channels are saved as SVD residuals as 
\begin{equation}
    \tX_{\text{SVD res}} = \tX - \tX_{\text{lr}} = \sum_{i=r+1}^c s_i \cdot \etU^i \cdot V^i.
\end{equation} 
\\
\textbf{Discrete Cosine Transform (DCT).} After the decomposition along channels, we further decompose $V^i$ over the spatial dimension using DCT. \\
Specifically, for each principal channel $V^i$, we first obtain the frequency-domain coefficients as $\etC^{i} = \text{DCT}(V^i, T)$.
Then, we only use the low-frequency component to reconstruct a representation as $V^i_{\text{lf}} = \text{IDCT}(\etC^i_{\text{lf}}, T_{\text{lf}})\in\sR^{\frac{h}{t}t'\times \frac{w}{t}t'}$ as in Sec \ref{sec:motivation-DCT}.
$\etC^i_{\text{lf}}$ has a reduced dimension and keeps only top-left low-frequency coefficients in $\etC^i$ (as shown in Figure \ref{fig:IRDecomp}).
$T_{\text{lf}}\in\sR^{t'\times t'}$ corresponds to DCT transformation matrix with reduced spatial dimension. 
The rest of the high-frequency components are saved as DCT residuals as
\begin{equation}
    V^i_{\text{DCT res}} = \text{IDCT}(\etC^i_{\text{res}}, T),
\end{equation}

\noindent where $\etC^i_{\text{res}}$ denotes $\etC^i$ with zeros on the top-left corner.

After SVD and DCT, we obtain the privacy-sensitive low-dimensional features as
\begin{equation}\label{eq:IRMain}
    \text{IR}_{\text{main}} = \sum_{i=1}^r s_i \cdot \etU^i \cdot V^i_{\text{lf}}.
\end{equation}

\noindent On the other hand, the residuals to be offloaded to the untrusted public environment are given as

\begin{equation}
\begin{split}
    &\text{IR}_{\text{res}}  = \tX - \text{IR}_{\text{main}} = \tX_{\text{SVD res}} + \tX_{\text{DCT res}}  \\
    & = \sum_{i=r+1}^c s_i \cdot \etU^i \cdot V^i \quad + \quad \sum_{i=1}^r s_i \cdot \etU^i \cdot V^i_{\text{DCT res}}.
\end{split}
\end{equation}

\noindent \IRR{} is further normalized as
\begin{equation}\label{eq:IRResNorm}
    \text{IR}_{\text{res,norm}} = \text{IR}_{\text{res}} / \max( 1, \left \| \text{IR}_{\text{res}} \right \|_2/C ),
\end{equation}
where $C$ is a scaling parameter for $\ell_2$ normalization. The normalization is necessary to bound \emph{sensitivity} for DP. For simplification, we denote the normalized residuals as \IRR{}.

Hence, \IRM{} has fewer principal channels and smaller spatial dimensions but contains most information in IR. 
The non-principal channels and high-frequency components, on the other hand, are saved in \IRR{}.
\IRM{} and \IRR{} are then respectively fed into a small model \MM{} in a private environment and a large model \MRS{} in a public environment.

\subsection{Residuals Perturbation and Quantization}\label{sec:method:quant}
While the low-dimensional representation \IRM{} has the most important information in the IR, \IRR{} might still contain some information such as a few high-frequency components.
Furthermore, as the communication from a private environment (e.g., TEEs) to a public environment (e.g., GPUs) is usually slow, sending floating-point high-dimensional residuals can significantly increase the communication overhead and prolong the total running time.

In this section, we perturb \IRR{} with a Gaussian mechanism and then apply binary quantization on the perturbed \IRR{} to reduce the inter-environment communication cost. \\
Given a DP budget $\epsilon$, we consider the Gaussian mechanism, and compute the corresponding noise parameter $\sigma$ (See Section \ref{sec:analysis}). 
For each tensor \IRR{}, we generate a noise tensor $\tN\in\sR^{c\times h \times w} \sim \mathcal{N}(0, \sigma^2\cdot \tI)$, and add it to \IRR{} in the private environment.
With noisy residuals, \IRN{}=\IRR{} $+ \ \tN$, we apply a binary quantizer as follows
\begin{equation}\label{eq:binquant}
    \small \text{IR}_{\text{quant}} = \mathrm{\small BinQuant}( \text{IR}_{\text{noisy}} ) =
    \begin{cases}
        0 \quad \text{IR}_{\text{noisy}} < 0, \\
        1 \quad \text{IR}_{\text{noisy}} \geq 0.
    \end{cases}
\end{equation}

As a result, the tensor to be offloaded to the public environment is a binary representation of the residuals. 
Compared to floating-point values, such a binary representation reduces communication by $32\times$.
Owing to the asymmetric decomposition, the values in \IRR{} are usually close to zero. Hence, a small noise is sufficient to achieve strong privacy protection (See formal analysis in Section \ref{sec:analysis}). 
Further ablation studies can be found in Section \ref{appx:ablation:quantization}.

\subsection{Model Design for Low-Dimensional \IRM{}}\label{sec:method:model}
Knowing that \IRM{} has a low-rank as given in Equation (\ref{eq:IRMain}), in this section, we show that developing an efficient \MM{} with low computation complexity is attainable.

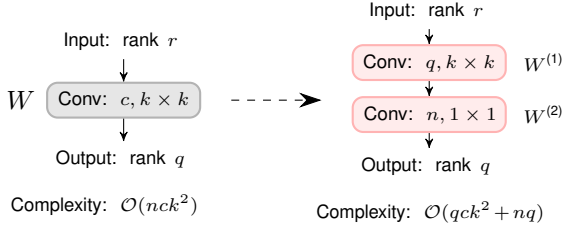
\begin{figure}[!htb]
    \centering
    \begin{tikzpicture}

        \node[op, draw=black!30, fill=black!10, text width=18mm](CONV){\scriptsize Conv: $c, k\times k$};
        \node[text width=16mm, above=3mm of CONV](IN){\scriptsize Input: rank $r$};
        \node[text width=18mm, below=3mm of CONV](OUT){\scriptsize Output: rank $q$};
        \node[text width=30mm, below=1mm of OUT](){\scriptsize Complexity: $\mathcal{O}(nck^2)$};
        \node[text width=3mm, left=1mm of CONV](W){$W$};

        \draw[->] (IN) -- (CONV);
        \draw[->] (CONV) -- (OUT);

        \node[op, draw=red!30, fill=pink!30, text width=18mm, right=20mm of CONV, yshift=5mm](CONV1){\scriptsize Conv: $q, k\times k$};
        \node[op, draw=red!30, fill=pink!30, text width=18mm, below=2mm of CONV1](CONV2){\scriptsize Conv: $n, 1\times 1$};
        \node[text width=16mm, above=2mm of CONV1](IN1){\scriptsize Input: rank $r$};
        \node[text width=18mm, below=2mm of CONV2](OUT1){\scriptsize Output: rank $q$};
        \node[text width=30mm, below=1mm of OUT1](){\scriptsize Complexity: $\mathcal{O}(qck^2+nq)$};
        \node[text width=3mm, right=1mm of CONV1](V){\scriptsize $W^{\text{(1)}}$};
        \node[text width=3mm, right=1mm of CONV2](U){\scriptsize $W^{\text{(2)}}$};

        \draw[->] (IN1) -- (CONV1);
        \draw[->] (CONV1) -- (CONV2);
        \draw[->] (CONV2) -- (OUT1);

        \draw [-{Stealth[length=3mm, width=2mm]}, dashed] ([xshift=4mm] CONV.east) -- +(12mm, 0mm);
    \end{tikzpicture}
    \caption{\footnotesize Model design for the low-dimensional \IRM{}. Knowing the rank in data, the number of channels in convolution layers can be reduced, leading to a reduction in computation complexity.}
    \label{fig:model}
    \vspace{-4mm}
\end{figure}

In linear layers such as convolutional layer, low-rank inputs lead to low-rank outputs \cite{legrace_PETS_2022}. 
Assuming inputs and outputs have rank $r$ and $q$, we split the convolution layer into two sub-layers as shown in Figure \ref{fig:model} (right). The first layer has $q$  ($k\times k$) kernels to learn the principal features, whereas the second layer has $c$ ($1\times 1$) kernels to combine the principal channels. 
We further add a kernel orthogonality regularization \cite{Orth_CVPR_2017} to the first layer to enhance the orthogonality of the output channels. 

Such a design comes with fundamental reasoning.
Theorem \ref{th:model} shows that, with knowledge of the ranks of inputs and outputs in a convolutional layer, it is possible to optimize a low-dimensional layer as in Figure \ref{fig:model} (right) such that it results in the same output as the original layer.
\begin{theorem}\label{th:model}
    For a convolution layer with weight $\tW\in\sR^{n\times c\times k\times k}$ with an input $\tX$ with rank $r$ and output $\tY$ with rank $q$, there exists an optimal $\tW^{(1)}\in\sR^{q\times c\times k\times k}, \tW^{(2)}\in\sR^{n\times q\times 1\times 1}$ in the low-dimensional layer such that the output of this layer denoted as $\tY^{'}$ satisfies
\begin{align}
    \min_{\tW^{(1)}, \tW^{(2)}}\left \| \tY - \tY^{'}  \right \| = 0.
\end{align}
\end{theorem}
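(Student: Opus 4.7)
The plan is to reduce the convolutional identity to a matrix-factorization problem via the standard \emph{im2col} unfolding and then close it with an SVD / row-space argument.

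First, I would rewrite both sides of the claimed identity in matrix form. Via im2col, unfold $\tX$ into a matrix $X \in \mathbb{R}^{ck^2 \times hw}$ and flatten $\tW$ into $W \in \mathbb{R}^{n \times ck^2}$, so that the original convolution satisfies $Y := W X \in \mathbb{R}^{n \times hw}$, where $Y$ is the matricization of $\tY$. For the factored two-layer convolution, the first $k \times k$ kernel $\tW^{(1)}$ unfolds to $W^{(1)} \in \mathbb{R}^{q \times ck^2}$, while the second $1 \times 1$ kernel $\tW^{(2)}$ is literally a channel-mixing matrix $W^{(2)} \in \mathbb{R}^{n \times q}$; hence the factored output equals $Y' = W^{(2)} W^{(1)} X$. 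It therefore suffices to produce $W^{(1)}, W^{(2)}$ with $W^{(2)} W^{(1)} X = Y$.

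Next, I would invoke the rank hypothesis on $\tY$. Take a thin SVD $Y = U \Sigma V^{*}$ with $U \in \mathbb{R}^{n \times q}$ and $V \in \mathbb{R}^{hw \times q}$ having orthonormal columns and $\Sigma \in \mathbb{R}^{q \times q}$ nonsingular. The key observation is that every row of $Y = W X$ is a linear combination of rows of $X$, so the row space of $Y$ is contained in the row space of $X$. Since $U$ has orthonormal columns, left-multiplying $Y$ by $U^{*}$ gives $U^{*} Y = \Sigma V^{*}$, so each row of $\Sigma V^{*}$ is itself a linear combination of rows of $Y$ and therefore lies in the row space of $X$. This guarantees the existence of a matrix $W^{(1)} \in \mathbb{R}^{q \times ck^2}$ with $W^{(1)} X = \Sigma V^{*}$. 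Setting $W^{(2)} = U$ then yields $W^{(2)} W^{(1)} X = U \Sigma V^{*} = Y$, i.e., $Y' = Y$ exactly. Folding $W^{(1)}$ back into $\tW^{(1)} \in \mathbb{R}^{q \times c \times k \times k}$ and $W^{(2)}$ into $\tW^{(2)} \in \mathbb{R}^{n \times q \times 1 \times 1}$ produces convolution kernels of the declared shapes whose composition satisfies $\tY' = \tY$, so $\|\tY - \tY'\| = 0$ is attained.

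The main obstacle I anticipate is keeping the two different notions of \emph{rank} straight. The paper defines the rank of $\tX$ as the rank of its channel-unfolding in $\mathbb{R}^{c \times hw}$, whereas the argument above goes through the im2col matrix in $\mathbb{R}^{ck^2 \times hw}$; in principle these can differ. Fortunately the construction only requires that the row space of $Y$ sit inside the row space of the im2col matrix, which is automatic from $Y = W X$ irrespective of the precise channel rank. The input-rank assumption $r$ therefore enters only implicitly --- it justifies why a low-dimensional factorization is a reasonable target and implies $q \le r$ in the regime of interest --- while the real lever that drives the proof is the output rank $q$, which is exactly what allows compressing the bottleneck to $q$ channels.
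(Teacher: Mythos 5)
Your proposal is correct, and it reaches the same matrix identity as the paper ($Y = WX$, $Y' = W^{(2)}W^{(1)}X$ via im2col) but closes the argument differently. The paper factorizes the \emph{input}: it writes $X = U\cdot V$ with $U\in\sR^{ck^2\times q}$ (leaning on Theorem 2 of the 3LegRace paper to tie the output rank to the input rank), splits $W = W^U + R$ with $W^U = W UU^{*}$ and $R\cdot U = 0$, and then takes $W^{(2)}W^{(1)}$ to be any rank-$q$ factorization of the low-rank matrix $W^U$. You instead take a thin SVD of the \emph{output} $Y = U\Sigma V^{*}$ and use the row-space containment $\mathrm{rowsp}(Y)\subseteq\mathrm{rowsp}(X)$ (automatic from $Y = WX$) to solve $W^{(1)}X = \Sigma V^{*}$, setting $W^{(2)} = U$. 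Your route is arguably tighter: it only uses the hypothesis $\mathrm{rank}(Y) = q$, whereas the paper's step $X = U\cdot V$ with $U$ having exactly $q$ columns implicitly assumes the im2col matrix $X$ itself has rank $q$, which is not guaranteed in general (one only has $\mathrm{rank}(Y)\le\min(\mathrm{rank}(W),\mathrm{rank}(X))$) and is why the paper must invoke the external rank-propagation result. You also correctly flag, and correctly dispose of, the mismatch between the paper's channel-unfolding notion of rank and the im2col rank — your argument never needs the input rank $r$ at all. What the paper's version buys in exchange is a proof organized around the low-rank structure of the input, which matches the design narrative (low-rank inputs justify a $q$-channel bottleneck) and the cited prior work; substantively, though, your construction is self-contained, constructive, and at least as general.
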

\noindent The proof is deferred to Appendix \ref{appx:model}.

\begin{remark}
    While the low-dimensional layer in Figure \ref{fig:model} (right) shares similar architecture as low-rank compression methods \cite{Lowrank_BMVC_2014,Lowrank_ICLR_2016, SensiSVD_GlobalSIP_2017}, low-rank compression inevitably incurs information loss in outputs. However, the layer in Figure \ref{fig:model} (right) can theoretically preserve all information given low-rank inputs and outputs.
\end{remark}

\subsection{Private Backpropogation}\label{sec:method:backprop}
While the asymmetric IR decomposition together with the randomized quantization mechanism ensures privacy in forward passes, information of \IRM{} can still be leaked through backpropagation on logits. 
In this section, we further propose a private backpropagation that removes the logits of \MM{} from the gradients to \MRS{}.

In detail, the gradients to \MM{} are calculated through a standard backpropagation algorithm that uses logits from both \MM{} and \MRS{}. While gradients to \MRS{} are calculated solely using \MRS{} logits. 
Hence, the logits from \MM{} will not be revealed to the outside. 
Specifically, given the output logits from \MM{} and \MRS{}: \zM{} and \zRS{}, we compute \texttt{Softmax} for $i$-th label \MM{}  and \MRS{} as
\begin{equation*}
\begin{split}
    \mathcal{M}_{\text{main}}: \bm{o}_{\text{tot}}(i) &= {e^{ \bm{z}_{\text{main}}(i) + \bm{z}_{\text{res}}(i) }} / {\sum\nolimits_{j=1}^{L} e^{ \bm{z}_{\text{main}}(j) + \bm{z}_{\text{res}}(j) } }, \\
    \mathcal{M}_{\text{res}}: \bm{o}_{\text{res}}(i) &= {e^{ \bm{z}_{\text{res}}(i) }} / {\sum\nolimits_{j=1}^{L} e^{ \bm{z}_{\text{res}}(j) } },
\end{split}
\end{equation*}
where $L$ denotes the number of labels in the current task. Following the backpropagation in \texttt{Softmax}, we compute gradients to \MM{} and \MRS{}: $\bm{g}_{\text{main}}$, $\bm{g}_{\text{res}}$, as 

\vspace{-3mm}

\begin{equation*}
    \bm{g}_{\text{main}} = \bm{o}_{\text{tot}} - \bm{y}, \quad \bm{g}_{\text{res}} = \bm{o}_{\text{res}} - \bm{y},
\end{equation*}
where $\bm{y}$ denotes one-hot encoding for labels. 

With the separate backpropagation on gradients, \MM{} avoids revealing its logit to \MRS{}, while still using \MRS{}'s logits for its own backpropagation.  

\subsection{Training Procedure}\label{sec:method:training}
The model training using \method{} consists of two stages.
\\
\textbf{Stage 1. } We train the backbone, \MB{}, and the main model, \MM{}, with \IRM{} only. \IRR{} is ignored and not shared with the public. Therefore, there is no privacy leakage at this stage, as all data and model parameters are kept in the private environment. After stage 1, we cache all residual data \IRR{} from SVD and DCT decomposition, and apply the randomized quantization mechanism once on \IRR{}. 
\\
\textbf{Stage 2. } We freeze the backbone model, \MB{}, and continue to train the main model, \MM{}, and the residual model, \MRS{}. As \MB{} is frozen, we directly sample residual inputs for \MRS{} from the cached residual data. While for \MB{} and \MM{}, we fetch data from the raw datasets, apply SVD and DCT decomposition, and send \IRM{} to \MM{}.

\noindent The training procedure is provided in algorithm \ref{alg:method}.

\begin{algorithm}[!htb]
\caption{\method{} training procedure}
\label{alg:method}
\begin{algorithmic}[1]
\REQUIRE $\mathrm{ep}_1$: \#epochs at stage 1, $\mathrm{ep}_2$: \#epochs at stage 2 \\
\REQUIRE $\epsilon$: privacy constraint \\
\STATE Initialize \MB{}, \MM{}, \MRS{} \\
\FOR{$t = 1, \cdots, \mathrm{ep}_1$}    
    \FOR{ a batch \blue{from} the dataset }
    \STATE Train \MB{} and \MM{}. \\
\ENDFOR
\ENDFOR
\STATE Freeze \MB{} and cache all residual data \IRR{}.
\STATE Apply the randomized quantization based on Eq (\ref{eq:binquant}). \\
\FOR{$t = 1, \cdots, \mathrm{ep}_2$}    
    \FOR{ a batch \blue{from} the dataset, cached residuals }
    \STATE Train \MM{} and \MRS{}. \\
    \ENDFOR
\ENDFOR
\end{algorithmic}
\end{algorithm}

\section{Privacy Analysis}\label{sec:analysis}

This section provides the differential privacy analysis for \method{}. 
As the backbone model \MB{}, the low-dimensional model \MM{} and final predictions remain private, the only public information are the residuals and the residual model \MRS{} at stage 2 during training. 
Therefore, we analyze the privacy leakage of the perturbed residuals at stage 2. 

Given neighboring datasets $\mathcal{D} = \left \{ X^1, \cdot, X^i, \cdot, X^N \right \}$ and $\mathcal{D}^{'} = \left \{ X^1, \cdot, 0, \cdot, X^N \right \}$ with $i$-th record removed, the global $\ell_2$-sensitivity is defined as 
$\Delta_2 = \sup_{\tX_{\text{res}}, \tX^{'}_{\text{res}}} \left \| \tX_{\text{res}} - \tX^{'}_{\text{res}} \right \|_F \leq C$,
where $\tX_{\text{res}}, \tX^{'}_{\text{res}}$ denote residuals obtained from $\mathcal{D}$ and $\mathcal{D}^{'}$. 
\\We provide our DP guarantee in Theorem \ref{th:privacy}.

\begin{theorem}\label{th:privacy}
\method{} ensures that the perturbed residuals 
and operations in the public environment satisfy $(\epsilon,\delta)$-DP given noise $\tN \sim \mathcal N (0, 2C^2\cdot\log{(2/\delta') / \epsilon'} )$ given sampling probability $p$, and $\epsilon=\log{(1 + p(e^{\epsilon'}-1))}, \delta=p\delta'$.
\end{theorem}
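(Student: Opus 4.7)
The plan is to assemble the DP guarantee from three standard ingredients: (i) a sensitivity bound on the residual tensor coming from the $\ell_2$ normalization in Eq.~(\ref{eq:IRResNorm}), (ii) the Gaussian mechanism calibrated to that sensitivity, and (iii) privacy amplification by subsampling, followed by a post-processing argument that absorbs the binary quantization, the public model \MRS{}, and any downstream computations.

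First I would verify the sensitivity claim. By the construction $\text{IR}_{\text{res,norm}} = \text{IR}_{\text{res}}/\max(1,\|\text{IR}_{\text{res}}\|_2/C)$ from Eq.~(\ref{eq:IRResNorm}), the contribution of a single record to the tensor released to the public environment is bounded in Frobenius norm by $C$. For neighboring datasets $\mathcal{D},\mathcal{D}'$ that differ in one record (replaced by $0$ as defined in the excerpt), the global $\ell_2$-sensitivity therefore satisfies $\Delta_2 \leq C$. Second, I would invoke the classical Gaussian mechanism: with $\tN \sim \mathcal{N}(0,\sigma^2 \tI)$ and $\sigma^2 = 2C^2\log(2/\delta')/\epsilon'$, releasing $\text{IR}_{\text{res}} + \tN$ satisfies $(\epsilon',\delta')$-DP (for $\epsilon' \in (0,1)$, the regime in which the analytic bound applies). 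Third, since each record in the cached residual set is only touched once by the mechanism (recall from Sec.~\ref{sec:method:training} that noise is applied once at the end of stage~1 and the perturbed residuals are then frozen), there is no composition across epochs; only the single subsampling step of forming a mini-batch with per-record inclusion probability $p$ remains to account for.

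Fourth, I would apply the standard amplification-by-subsampling lemma for the Gaussian mechanism, which converts an $(\epsilon',\delta')$-DP mechanism applied to a Poisson subsample of rate $p$ into an $(\epsilon,\delta)$-DP mechanism with $\epsilon = \log(1 + p(e^{\epsilon'}-1))$ and $\delta = p\delta'$, matching the theorem statement exactly. Finally, I would close the argument by appealing to post-processing invariance of DP: the binary quantization $\mathrm{BinQuant}(\cdot)$ in Eq.~(\ref{eq:binquant}), the forward/backward passes of \MRS{} in the public environment, and any logits produced there are all data-independent functions of the noisy residuals, so they inherit the same $(\epsilon,\delta)$-DP guarantee. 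Since \MB{}, \MM{}, \IRM{}, and the fused logits never leave the private environment, they contribute no additional public leakage.

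The main obstacle, such as it is, lies less in any single inequality and more in cleanly separating what must be proved from what can be cited. The subtle point is justifying that the overall \method{} pipeline reduces to a \emph{single} application of the Gaussian mechanism per record: this hinges on the stage~1/stage~2 split where \IRR{} is cached after one perturbation, so that repeated epochs over the cached data are post-processing rather than re-queries. I would make this observation explicit before invoking amplification, since the quoted $\delta = p\delta'$ factor assumes a single subsampled Gaussian step, not a composed sequence. A secondary technical care-point is specifying the subsampling scheme (Poisson vs. fixed-size) precisely enough that the amplification lemma applies verbatim with the constants in the theorem; I would state the Poisson variant and note that the regime $\epsilon' \in (0,1)$ is assumed for the analytic Gaussian bound.
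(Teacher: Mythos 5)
Your proposal follows essentially the same route as the paper's proof: bounding the sensitivity by $C$ via the normalization in Eq.~(\ref{eq:IRResNorm}), invoking the Gaussian mechanism with $\sigma^2 = 2C^2\log(2/\delta')/\epsilon'$, applying amplification by subsampling to obtain $\epsilon=\log(1+p(e^{\epsilon'}-1))$, $\delta=p\delta'$, and absorbing the quantization and all public-side computation by post-processing. Your explicit observation that the one-time perturbation of cached residuals avoids composition across epochs is the same point the paper makes in its remark on training with cached residuals, so the argument matches the paper's in substance.
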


\noindent The proof of Theorem \ref{th:privacy} relies on the analysis of Gaussian mechanism in and the post-processing rule of DP \cite{AproximatedDP,DPFoundation}, and is provided in Appendix \ref{appx:privacy} for completeness.

\begin{remark}
    (Training with Cached Residual Improves DP). Sampling an image multiple times does not incur additional privacy leakage. As described in Sec. \ref{sec:method:training}, the perturbation is only performed once before training the residual model \MRS{}.
    When training the residual model \MRS{}, \method{} directly samples input for \MRS{} from the perturbed cached residuals, with no need to perform perturbation on the fly \cite{GAP}.
\end{remark}

\section{Empirical Evaluations}\label{sec:eval}

In this section, we evaluate \method{} in terms of model accuracy, running time, and resilience against attacks. We conduct further ablation studies on different ways of merging logits and effects of perturbation in Appendix \ref{appx:ablation}.\\
\textbf{Datasets and Models.} We use CIFAR10/100 \cite{CIFAR} and ImageNet \cite{ImageNet}. For models, we choose ResNet-18 and ResNet-34 \cite{ResNet_2018_CVPR}. 
Hyperparameters are provided in Appendix \ref{appx:exp:hparam} \\
\textbf{Model Configuration.} For ResNet models, \MB{} consists of the first convolution layer, while all \emph{ResBlocks} \cite{ResNet_2018_CVPR} and the fully-connected layer are offloaded in \MRS{}. On the other hand, \MM{}'s details are deferred to Appendix \ref{appx:arch}. \\
\textbf{Asymmetric Decomposition}. For the SVD, based on \cite{legrace_PETS_2022}, we keep $r=8$ principal channels in \IRM{} for CIFAR-10/100 and $12$ for ImageNet to keep $>95\%$ information in the private environment. For the DCT, we set $t,t'=16, 8$ on CIFAR, and $14, 7$ on ImageNet to avoid noticeable information leakage in the residuals.
Note that while decomposition with larger $r$ and $t'$ leads to less information leakage, it also incurs more computations in the private environment. 
Therefore, we choose the parameters to maintain a reasonable trade-off between privacy leakage and complexity. 

\subsection{Model Accuracy}
In this section, we evaluate model accuracy. 
We train ResNet-18 on CIFAR-10/100, and ResNet-18/34 on ImageNet.
We first compare the accuracy of the following schemes.
\begin{itemize}
    \item \MM{}: Train \MB{}+\MM{} with no residuals.
    \item \MM{}+\MRS{} : Train \MB{} + (\MM{}, \MRS{}) with DP.
    \item Orig: Train the original model without \method{}. 
\end{itemize}

Figure \ref{fig:acc:cifar} and \ref{fig:acc:imagenet} show the final accuracy on CIFAR-10/100 and ImageNet.
First, owing to the effective asymmetric IR decomposition and the low-dimensional model, \MM{} already gives an accuracy that is close to the original model on both CIFAR-10/100 and ImageNet datasets. 
With adding \MRS{}, \method{} achieves a comparable accuracy as the original model. 
With the residual information further protected by Gaussian noise, we observe that \method{} strikes a much improved privacy-utility trade-off, with slight performance degradation under a low privacy budget (small $\epsilon$).

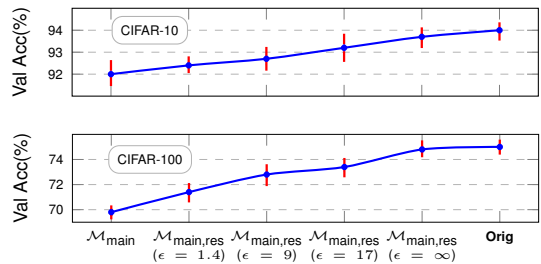
\begin{figure}[!htb]
    \centering
    \begin{subfigure}{.45\textwidth}
        \centering
        \hspace{-4mm}
        \begin{tikzpicture}
            \begin{axis}[
            width=.99\linewidth, height=.35\linewidth,
            xmin=-0.5, xmax=5.5,
            xtick={0, 1, 2, 3, 4, 5},
            xticklabels={,,},
            ymin=91, ymax=95,
            ylabel={\scriptsize Val Acc(\%)},
            ytick={92, 93, 94},
            yticklabels={92, 93, 94},
            ymajorgrids,
            major grid style={line width=.2pt, draw=black!30, dashed},
            ticklabel style={font=\tiny},
            legend style={at={(1.0,0.35)}},
            ]
    
            \addplot+[ 
                thick, blue, smooth, mark=*, mark options={solid, scale=0.4},
                error bars/.cd, y fixed, y dir=both, y explicit, error mark options={draw=red, line width=.3mm}
            ] 
            table [x=x, y=y, y error plus=error1, y error minus=error2, col sep=comma]{
                x,    y,      error1,    error2
                0,    92.0,   0.4,       0.3
                1,    92.4,   0.17,      0.11
                2,    92.7,   0.3,       0.3
                3,    93.2,   0.4,       0.4
                4,    93.7,   0.19,     0.27
                5,    94,     0.12,     0.23
            };

            \node[fill=white, draw=black!30, rounded corners] at (axis cs:0.5,94){\tiny CIFAR-10};
            \end{axis}
        \end{tikzpicture}
    \end{subfigure}

    \vspace{1mm}
    
    \begin{subfigure}{.45\textwidth}
        \centering
        \begin{tikzpicture}
        \begin{axis}[
            xmin = -0.5, xmax = 5.5, ymin = 69, ymax = 76,
            ylabel = {\scriptsize Val Acc(\%)}, 
            xtick = { 0, 1, 2, 3, 4, 5 },
            xticklabels = { \tiny $\mathcal{M}_{\text{main}}$, 
                            \tiny {$\mathcal{M}_{\text{main,res}}$ \\ ($\epsilon=1.4$)}, 
                            \tiny {$\mathcal{M}_{\text{main,res}}$ \\ ($\epsilon=9$)},
                            \tiny {$\mathcal{M}_{\text{main,res}}$ \\ ($\epsilon=17$)},
                            \tiny {$\mathcal{M}_{\text{main,res}}$ \\ ($\epsilon=\infty$)}, 
                            \tiny \textbf{Orig}},
            xticklabel style={ align=center,text width=14mm },
            ytick = { 70, 72, 74}, yticklabels = { 70, 72, 74 },
            ymajorgrids, major grid style={line width=.1pt, draw=black!30, dashed},
            width=.99\textwidth, height=.35\textwidth, line width=0.1pt,
            ticklabel style={font=\tiny},
            ]
    
            \addplot+[ 
                thick, blue, smooth, mark=*, mark options={solid, scale=0.4},
                error bars/.cd, y fixed, y dir=both, y explicit, error mark options={draw=red, line width=.3mm}
            ] 
            table [x=x, y=y, y error plus=error1, y error minus=error2, col sep=comma]{
                x,    y,      error1,    error2
                0,    69.8,   0.12,      0.15
                1,    71.4,   0.3,      0.4
                2,    72.8,   0.4,       0.5
                3,    73.4,   0.3,       0.4
                4,    74.8,   0.3,      0.2
                5,    75,     0.16,     0.19
            };

            \node[fill=white, draw=black!30, rounded corners] at (axis cs:0.5,74){\tiny CIFAR-100};
        \end{axis}
        \end{tikzpicture}
    \end{subfigure}
    \caption{\footnotesize Val acc of ResNet-18 on CIFAR-10, CIFAR-100. \MM{} gives accuracy close to the original model. With adding \MRS{}, \method{} achieves comparable accuracy as the original model. By adding noise to \IRR{}, \method{} achieves strong DP while still preserving the model performance.}
    \label{fig:acc:cifar}
    \vspace{-4mm}
\end{figure}

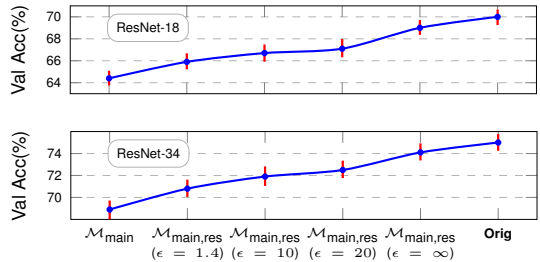
\begin{figure}[!htb]
    \centering
    \begin{subfigure}{.45\textwidth}
        \centering
        \hspace{-4mm}
        \begin{tikzpicture}
        \begin{axis}[
            xmin = -0.5, xmax = 5.5, ymin = 63, ymax = 71,
            ylabel = {\scriptsize Val Acc(\%)}, 
            xtick = { 0, 1, 2, 3, 4, 5 },
            xticklabels={,,},
            ytick = {64, 66, 68, 70 }, yticklabels = {64, 66, 68, 70 },
            ymajorgrids, major grid style={line width=.1pt, draw=black!30, dashed},
            width=.99\textwidth, height=.35\textwidth, line width=0.1pt,
            ticklabel style={font=\tiny},
            ]
    
            \addplot+[ 
                thick, blue, smooth, mark=*, mark options={solid, scale=0.4},
                error bars/.cd, y fixed, y dir=both, y explicit, error mark options={draw=red, line width=.3mm}
            ] 
            table [x=x, y=y, y error plus=error1, y error minus=error2, col sep=comma]{
                x,    y,    error1,    error2
                0,    64.4, 0.21,     0.16
                1,    65.9,  0.30,    0.2
                2,    66.7,  0.3,     0.3
                3,    67.1,  0.4,    0.3
                4,    69.0,  0.24,    0.16
                5,    70,    0.18,    0.26
            };

            \node[fill=white, draw=black!30, rounded corners] at (axis cs:0.5,69){\tiny ResNet-18};
        \end{axis}
        \end{tikzpicture}
    \end{subfigure}

    \vspace{1mm}
    
    \begin{subfigure}{.45\textwidth}
        \centering
        \begin{tikzpicture}
        \begin{axis}[
            xmin = -0.5, xmax = 5.5, ymin = 68, ymax = 76,
            ylabel = {\scriptsize Val Acc(\%)}, 
            xtick = { 0, 1, 2, 3, 4, 5 },
            xticklabels = { \tiny $\mathcal{M}_{\text{main}}$, 
                            \tiny {$\mathcal{M}_{\text{main,res}}$ \\ ($\epsilon=1.4$)}, 
                            \tiny {$\mathcal{M}_{\text{main,res}}$ \\ ($\epsilon=10$)}, 
                            \tiny {$\mathcal{M}_{\text{main,res}}$ \\ ($\epsilon=20$)}, 
                            \tiny {$\mathcal{M}_{\text{main,res}}$ \\ ($\epsilon=\infty$)}, 
                            \tiny \textbf{Orig}},
            xticklabel style={ align=center,text width=14mm },
            ytick = { 70, 72, 74 }, yticklabels = { 70, 72, 74 },
            ymajorgrids, major grid style={line width=.1pt, draw=black!30, dashed},
            width=.99\textwidth, height=.35\textwidth, line width=0.1pt,
            ticklabel style={font=\tiny},
            ]
    
            \addplot+[ 
            thick, blue, smooth,
            mark=*, mark options={solid, scale=0.4},
            error bars/.cd,
              y fixed,
              y dir=both,
              y explicit, error mark options={draw=red, line width=.3mm}
            ] 
            table [x=x, y=y, y error plus=error1, y error minus=error2, col sep=comma]{
                x,      y,      error1,     error2
                0,      68.9,   0.34,       0.41
                1,      70.8,   0.34,       0.27
                2,      71.9,   0.44,       0.37
                3,      72.5,   0.37,       0.25
                4,      74.1,   0.34,       0.25
                5,      75,     0.29,       0.27
            }; 

            \node[fill=white, draw=black!30, rounded corners] at (axis cs:0.5,74){\tiny ResNet-34};
        \end{axis}
        \end{tikzpicture}
    \end{subfigure}
    \caption{\footnotesize Val acc of ResNet-18 and ResNet-34 on ImageNet.}
    \vspace{-3mm}
    \label{fig:acc:imagenet}
\end{figure}

To further study the privacy-utility trade-off, we compare \method{} with a scheme that directly adds noise to IR of the original model (naive-DP) under the same privacy guarantee. As in Equation (\ref{eq:IRResNorm}), we perform normalization on IR before perturbation. 
As shown in Table \ref{tab:acc:dp}, training with noise added to IR incurs significant performance degradation given the same $\epsilon$. 
With the asymmetric IR decomposition, \method{} offers a much better privacy-utility trade-off. 

\begin{table}[!htb]
    \centering
    \small
    \caption{Model accuracy of ResNet-18 by perturbing IR and \IRR{} under the same privacy budget ($\epsilon=1.4$). }
    \label{tab:acc:dp}
    \begin{tabular}{c|cc}
        \toprule
                    Dataset & \method{}: perturb \IRR{} & naive-DP: perturb IR  \\
        \midrule
        CIFAR-10  & $92.4 \%$ & $69.6\%$ $(\downarrow \red{-22.8})$ \\
        CIFAR-100 & $71.4\%$ & $48.3\%$ ($\downarrow \red{-23.1}$) \\
        ImageNet & $65.9\%$ & $34.4\%$ ($\downarrow \red{-31.5}$) \\
       \bottomrule
    \end{tabular}
    \vspace{-3mm}
\end{table}

\subsection{Running Time Analysis} 
In this section, we compare running time using \method{} with a private training framework \texttt{3LegRace} \cite{legrace_PETS_2022} and private inference framework \texttt{Slalom} \cite{Slalom_ICLR_20}.
As a reference, we also include the time of running the original model solely in a private environment (\texttt{Priv-only}).

\textbf{Theoretical Complexity. } 
Table \ref{tab:model:complexity} lists the theoretical computational complexities by MACs (i.e., multiply-accumulate) of ResNet-18 on CIFAR-10/100, and ResNet-34 on ImageNet during the inference phase with a batch of size $1$.
For SVD, we use an approximation algorithm \cite{legrace_PETS_2022} that only computes the first $r$ principal channels.
First, we can see that SVD and DCT only account for a very small fraction of the total computations, which aligns with the real running time in Table \ref{tab:runtimebreak}.
On the other hand, compared to MACs in \MRS{}, the computation complexity of \MB{}+\MM{} is much smaller, only accounting for $\sim 10\%$ of MACs in \MRS{}.
This shows that, with the asymmetric decomposition that embeds most information into a low-dimensional representation, the low-dimensional model effectively reduces the computation cost of the resource-constrained private environments.

\begin{table}[!htb]
    \centering
    \small
    \caption{\footnotesize MACs of ResNet-18 and ResNet-34 during forward passes with batch size 1. Compared to the residual model, the complexity of the backbone and the main model is much smaller. }
    \label{tab:model:complexity}
    \begin{tabular}{c|cccc}
    \toprule
        Model & \MB{}+\MM{} & SVD & DCT & \MRS{} \\ \midrule
        ResNet-18 & 48.3 M & 0.52 M & 0.26 M & 547M  \\
        ResNet-34 & 437 M & 1.6 M & 0.7 M & 3.5G \\
    \bottomrule
    \end{tabular}
    \vspace{-2mm}
\end{table}

\textbf{Real Running Time. }
We test ResNet-18 on CIFAR-100 with batch size $32$ and average per-iteration time across one epoch.
We use Intel SGX \cite{SGX_2016} as a private environment, and Nvidia RTX 5000 as an untrusted public environment.

\begin{table}[!htb]
    \caption{\footnotesize Running time of \method{} and other baselines (ResNet-18/CIFAR-100, $b=32$). \method{} achieves significant speedup compared to \texttt{3LegRace}, and \texttt{Slalom}. Each cell denotes (time (ms), speedup)} 
    \label{tab:runtime}
    \centering
    \small
    \begin{tabular}{c|cccc}
        \toprule
  Task & \texttt{Priv-only} & \texttt{3LegRace} & \texttt{Slalom} & \method{} \\
        \midrule
       Train & 1372 & 237 ($6\times$) & - & 62 ($22\times$) \\
       Infer. & 510 & 95 ($5\times$) & 84 ($6\times$) & 20 ($25\times$) \\
        \bottomrule
    \end{tabular}
    \vspace{-2mm}
\end{table}

\begin{table}[!htb]
    \centering
    \small
    \caption{\footnotesize Time breakdown on ResNet-18/CIFAR-100. \method{} significantly shrinks the time gap between the private and public environments.}
    \label{tab:runtimebreak}
    \begin{tabular}{c|ccc}
    \toprule
         & \MB{} & Decomp. & Parallel \MM{}/\MRS{} \\
    \midrule
        Forward & 1 ms & 3 ms & 5/16 ms\\
        Backward & 2 ms & 1 ms & 11/39 ms\\
    \bottomrule
    \end{tabular}
    \vspace{-2mm}
\end{table}

Table \ref{tab:runtime} shows the per-iteration training and inference time for \method{} and the baselines.
In both phases, owing to the effective asymmetric decomposition, the required computation in the private environment is greatly reduced.
Hence, \method{} achieves significant speedups compared to running a model solely in the private environment (\texttt{Priv-only}). 
Moreover, unlike the existing private inference method \texttt{Slalom} and training method \texttt{3Legrace}, \method{} obviates the need for frequent and costly inter-environment communication. 
Hence, \method{} offers much faster training and inference. 
Table \ref{tab:runtimebreak} further lists the running time breakdown for \method{}.
We observe that the time on \MB{} and IR decomposition is marginal compared to \MM{} and \MRS{}. 
While the theoretical computation complexity in \MM{} is much lower than \MRS{}, the real running time of \MM{} still dominates the forward and backward passes. 
Nevertheless, compared to \texttt{Slalom} and \texttt{3LegRace}, the time gap between the private and public environments is significantly shrunk.

\subsection{Protection against Attacks}\label{subsec:modelinversion}
This section evaluates \method{} against two privacy attacks: a model inversion attack called \texttt{SecretRevealer}\cite{SecretRevealer_2020_CVPR}, and a membership inference attack called \texttt{ML-Leaks} \cite{MLLeaks}. 

\textbf{Model Inversion Attacks.} \texttt{SecretRevealer} can leverage prior knowledge, such as blurred images, when reconstructing training samples. 
In our case, we allow the attack to use the quantized residuals as the prior knowledge when reconstructing images.
Other model inversion attacks \cite{MIplug, MIknowledge, MI_Boundary, MI_Fidelity} need confidence scores/predicted labels from the whole model, hence they do not apply to our setting.

Following the attack protocol, we first take quantized residuals as a prior knowledge and train a generative model $G$.
Then, we optimize the latent inputs $z$ to minimize the loss on the residual model and the discriminator used to penalize unrealistic images ($L_D$): $z^{*} = \arg\min_{z} L_D(G(z)) + \lambda L_{\mathcal{M}_{\text{res}}}(G(z))$, where $\lambda$ controls the weights of $L_{\mathcal{M}_{\text{res}}}$. 
\begin{table}[!htb]
    \vspace{-3mm}
    \caption{\footnotesize \method{}'s performance against \texttt{SecretRevealer} on ResNet-18/CIFAR-100. Without DP, the attacker gains some prior information. With noise added, the quality of reconstruction degrades significantly.}
    \label{tab:attack}
    \small
    \centering
    \begin{tabular}{cc|cc|cc}
        \toprule
        \multicolumn{2}{c|}{No DP} & \multicolumn{2}{c|}{DP with $\epsilon=1.4$} & \multicolumn{2}{c}{No residuals} \\
        \midrule
         $\mathrm{SSIM}$ & $Acc_{\mathcal{M}}$ & $\mathrm{SSIM}$ & $Acc_{\mathcal{M}}$ & $\mathrm{SSIM}$ & $Acc_{\mathcal{M}}$ \\
         0.18 & 6.75\% & 0.09 & 2.13\% & 0.08 & 1\% \\
        \bottomrule
    \end{tabular}
    \vspace{-3mm}
\end{table}

Table \ref{tab:attack} shows \method{}'s performance against the model inversion attack on ResNet-18/CIFAR-100. We use the structural similarity index ($\mathrm{SSIM}$) to measure the similarity between the reconstructed and the original images \cite{wang2004image}. We also measure the target model's accuracy $Acc_{\mathcal{M}}$ given the reconstructed images as inputs. High $Acc_{\mathcal{M}}$ indicates the target model regarding the reconstructed images by the attacker close to the original training samples.
First, we observe that given quantized residuals without noise added, the attack can generate images that have slightly high accuracy on the target model. This indicates that the attacker can leverage the residual information during attacks. 
The observation is also visually reflected in reconstructed samples in Figure \ref{fig:attack}. Compared to the original training samples, some reconstructed images reveal the outline of objects (e.g., row 1, col 3).  
However, with DP, the attacker fails to use the residual information to generate images similar to the training samples. 
In particular, the accuracy of generated images on the target model is significantly reduced and also  $\mathrm{SSIM}$ between the reconstructed and the original samples.  
As a result, the attacker behaves like the case that generates images without prior knowledge of the residuals (col 3 in Table \ref{tab:attack}).

\begin{figure}[!htb]
    \centering
    \begin{subfigure}{.15\textwidth}
        \centering
        \includegraphics[width=\linewidth]{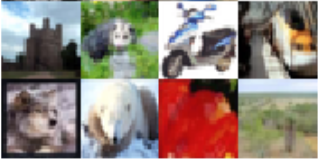}
        \caption{\footnotesize Original samples}
    \end{subfigure}
    \hfill
    \begin{subfigure}{.15\textwidth}
        \centering
        \includegraphics[width=\linewidth]{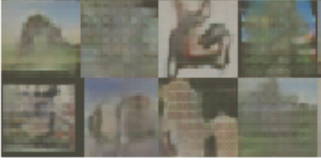}
        \caption{\footnotesize Recon. (no noise)}
    \end{subfigure}
    \hfill
    \begin{subfigure}{.15\textwidth}
        \centering
        \includegraphics[width=\linewidth]{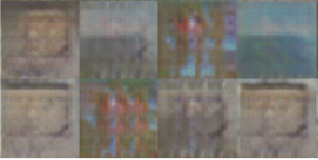}
        \caption{\footnotesize Recon. ({ $\epsilon = 1.4$})}
    \end{subfigure}
    \vspace{-1mm}
    \caption{\footnotesize Visualization of reconstructed images compared to the original training samples. With no DP, the model inversion attack recovers certain features ((b), row 1, col 3). However, when adding noise to the residual, the generated images are drastically affected. }
    \label{fig:attack}
    \vspace{-2mm}
\end{figure}

\textbf{Membership Inference Attacks.} We choose \texttt{ML-Leaks}, a strong membership inference attack that infers membership based on confidence scores \cite{MLLeaks}.
In our case, we allow the attack to use confidence scores from \MRS{}. Other membership inference attacks either required labels \cite{labelMI, LabelOnly1, LabelOnly2} or susceptible to noise \cite{MI_SP_2022}. As the predicted labels are secure with \method{} and residuals are perturbed, these methods do not apply to our setting.  
 
\noindent Following the procedure in \cite{MLLeaks}, we feed private and public samples to the shadow model, and obtain confidence score vectors from the residual model \MRS{}. We then use the vectors from public and private datasets to train an attacker model, which learns to classify whether the confidence score vector comes from public or private datasets.

\begin{table}[!htb]
    \centering
    \caption{\footnotesize Membership inference attack on Delta (ResNet-18, CIFAR-100). The DP mechanism is essential to provide further protection for the residual.}
    \label{tab:attack:mmi}
    \small
    \begin{tabular}{c|cc|cc}
    \toprule
         &  \multicolumn{2}{c|}{Attack w. $5k$ samples} & \multicolumn{2}{c}{Attack w. $10k$ samples} \\ \midrule
         & No DP & $\epsilon=1.4$ & $\epsilon=\infty$ & $\epsilon=1.4$ \\
        Acc & 0.56 & 0.52 & 0.60 & 0.55 \\
        F1 & 0.68 & 0.57 & 0.73 & 0.68 \\
    \bottomrule
    \end{tabular}
    \vspace{-3mm}
\end{table}

Table \ref{tab:attack:mmi} lists the attack performance on CIFAR-100 with ResNet-18. To align with \attMMI{}'s protocol, we use $5k$ and $10k$ samples from the training dataset as a public dataset. The rest of the samples are used as a private dataset to train the target model (using \method{}). We train the attack model for $50$ epochs, with an initial learning rate of $0.1$. \\
We observe that with perturbed residuals, attacks through \MRS{}'s outputs result in a poor performance compared to training without noise added. It indicates the DP mechanism provides further protection for the residuals, and prevents attackers from inferring membership. 
Furthermore, as the number of public samples reduces, the attack performance degrades further, implying the attack also heavily depends on prior knowledge via accessing a subset of the target dataset. 
Such an observation reveals one critical limitation of the attacks. That is, they need to get access to a subset of the target data to obtain a good estimate of the target data distributions. Otherwise, the attack performance collapses. 
However, in real scenarios, private data can be completely out of attackers' reach, rendering the target distribution's estimation impossible. As a result, the attacks can easily fail.

\section{Conclusion}\label{sec:conclusion}
We proposed a generic private training and inference framework, \method{}, with strong privacy protection, high model accuracy, and low complexity. 
\method{} decomposes the intermediate representations into asymmetric flows: information-sensitive and residual flows. We design a new low-dimensional model to learn the information-sensitive part in a private environment, while outsourcing the residual part to a large model in a public environment.
A DP mechanism and binary quantization scheme further protect residuals and improve inter-environment communication efficiency.
Our evaluations show that \method{} achieves strong privacy protection while maintaining model accuracy and computing performance.
While we evaluate \method{} in a TEE-GPU environment, \method{} can be generalized to other  setups such as federated settings, with resource-constrained client side as a private environment, and the server as a public environment. 

\subsection*{Acknowledgments}

This material is based upon work supported by ONR grant N00014-23-1-2191, ARO grant W911NF-22-1-0165, Defense Advanced Research Projects Agency (DARPA) under Contract No. FASTNICS HR001120C0088 and HR001120C0160. The views, opinions, and/or findings expressed are those of the author(s) and should not be interpreted as representing the official views or policies of the Department of Defense or the U.S. Government.

{   
    \small
    \bibliographystyle{ieeenat_fullname}
    \bibliography{main}
}

\clearpage
\setcounter{page}{1}
\maketitlesupplementary

\section{Proofs of Theorem \ref{th:model} and Theorem \ref{th:privacy}}
In this appendix, we provide our proofs. We start with the proof of Theorem \ref{th:model} in Appendix \ref{appx:model}. Then, we provide the proof of Theorem \ref{th:privacy} in Appendix \ref{appx:privacy}. 
\subsection{Proof of Theorem \ref{th:model}}
\label{appx:model}
We now prove Theorem \ref{th:model} showing the feasibility of designing a low-dimensional layer.

\begin{appxtheorem}
    For a convolution layer with weight $\tW\in\sR^{n\times c\times k\times k}$ with an input $\tX$ with rank $r$ and output $\tY$ with rank $q$, there exists an optimal $\tW^{(1)}\in\sR^{q\times c\times k\times k}, \tW^{(2)}\in\sR^{n\times q\times 1\times 1}$ in the low-dimensional layer such that the output of this layer denoted as $\tY^{'}$ satisfies
\begin{align}
    \left \| \tY - \tY^{'}  \right \| = 0.
\end{align}
\end{appxtheorem}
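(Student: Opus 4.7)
The plan is to convert the convolutional layers into matrix multiplications via the standard \emph{im2col} unfolding, and then argue existence through a rank argument on the resulting matrix factorization. First, I would flatten the original convolution as $Y = W \cdot X$, where $Y \in \mathbb{R}^{n \times hw}$ is obtained by reshaping $\tY$, $W \in \mathbb{R}^{n \times ck^2}$ is the reshaped kernel $\tW$, and $X \in \mathbb{R}^{ck^2 \times hw}$ is the im2col matrix of $\tX$. Likewise, the two-layer cascade $\tW^{(2)} \circledast (\tW^{(1)} \circledast \tX)$ reduces to $Y' = W^{(2)} \cdot W^{(1)} \cdot X$, where $W^{(1)} \in \mathbb{R}^{q \times ck^2}$ corresponds to the $k\times k$ kernel (the $q$ ``principal'' feature extractors) and $W^{(2)} \in \mathbb{R}^{n \times q}$ corresponds to the $1\times 1$ recombination kernel, since a $1\times 1$ convolution is exactly a channel-wise linear map. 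The Frobenius norm of the tensor equals that of its matrix unfolding, so it suffices to find $W^{(1)}, W^{(2)}$ with $W^{(2)} W^{(1)} X = W X$.

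Next, I would exploit the hypothesis $\mathrm{rank}(\tY) = q$, which in the unfolded form means $\mathrm{rank}(Y) = q$. Thus $Y$ admits a rank factorization $Y = U V$ with $U \in \mathbb{R}^{n \times q}$ and $V \in \mathbb{R}^{q \times hw}$ (e.g., via truncated SVD, keeping the top $q$ singular components). Setting $W^{(2)} = U$, the remaining task is to find $W^{(1)}$ such that $W^{(1)} X = V$. The key observation is that every row of $Y$ is a linear combination of rows of $X$ (since $Y = W X$), so the row space of $Y$ is contained in the row space of $X$. Because $V$ is a basis for the row space of $Y$, each row of $V$ lies in the row space of $X$, which means $V = W^{(1)} X$ for some $W^{(1)} \in \mathbb{R}^{q \times ck^2}$ (concretely, one can take $W^{(1)} = V X^{+}$, where $X^{+}$ is the Moore--Penrose pseudoinverse of $X$; since each row of $V$ is in the row space of $X$, we have $W^{(1)} X = V X^{+} X = V$).

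Finally, reshape $W^{(1)}$ back into $\tW^{(1)} \in \mathbb{R}^{q \times c \times k \times k}$ and $W^{(2)}$ into $\tW^{(2)} \in \mathbb{R}^{n \times q \times 1 \times 1}$. This yields $Y' = W^{(2)} W^{(1)} X = U V = Y$, hence $\|\tY - \tY'\| = 0$, and the minimum over $\tW^{(1)}, \tW^{(2)}$ is therefore attained at zero. The rank $r$ of the input does not explicitly enter the argument (it only affects the feasibility/tightness of the upper bound $\mathrm{rank}(Y) \le \min(\mathrm{rank}(W), \mathrm{rank}(X))$ and motivates why $q$ can be taken small in practice), so I would include a brief remark explaining that $r$ controls how small $q$ can be made while the stated equality already holds once $q \ge \mathrm{rank}(Y)$.

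The main obstacle I anticipate is not the algebra but the bookkeeping: one must verify that the im2col unfolding commutes correctly with the two-layer composition, i.e., that reshaping a $q \times ck^2$ matrix as a $k\times k$ kernel followed by a $1\times 1$ kernel genuinely produces $W^{(2)} W^{(1)} X$ in im2col form. I would therefore devote a short lemma (or inline justification) confirming that convolution with a $1\times 1$ kernel of shape $n \times q$ acts as left-multiplication by an $n \times q$ matrix on the unfolded feature map, so that the cascade identity $Y' = W^{(2)} W^{(1)} X$ is rigorously established before invoking the rank factorization.
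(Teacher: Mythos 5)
Your proposal is correct, and it reaches the conclusion by a genuinely different decomposition than the paper. The paper also begins from the im2col identities $Y = WX$ and $Y' = W^{(2)}W^{(1)}X$, but from there it factorizes the \emph{unfolded input} $X = U\cdot V$ (via SVD, with $U$ having $q$ orthogonal columns, leaning on Theorem 2 of the 3LegRace paper that the output rank is governed by the rank of $X$), splits the kernel matrix as $W = W^{U} + R$ with $W^{U} = W U U^{*}$ and $R\cdot U = 0$, and then observes that $\mathrm{rank}(W^{U}) \le q$ so any rank-$q$ factorization $W^{(2)}W^{(1)} = W^{U}$ drives the error to zero. You instead factorize the \emph{output}: take a rank factorization $Y = UV$ with $U \in \mathbb{R}^{n\times q}$, $V \in \mathbb{R}^{q\times hw}$, set $W^{(2)} = U$, and recover $W^{(1)} = V X^{+}$ from the row-space containment $\mathrm{rowsp}(Y) \subseteq \mathrm{rowsp}(X)$, so that $W^{(2)}W^{(1)}X = UVX^{+}X = UV = Y$. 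Your route buys self-containedness and explicit closed forms: it uses only the stated hypothesis $\mathrm{rank}(Y) = q$ (indeed $\mathrm{rank}(Y)\le q$ suffices) and avoids both the external citation and the implicit assumption that the unfolded $X$ admits a rank-$q$ factorization, which is not literally the same as the tensor-level ``input has rank $r$'' hypothesis. The paper's route, on the other hand, makes visible how the low-rank structure of the input propagates into a rank-$q$ effective kernel $W^{U}$, which matches the design intuition behind the two sub-layers (the first learns principal features, the second recombines channels). Your flagged bookkeeping point — that a $1\times 1$ convolution acts as left multiplication by an $n\times q$ matrix on the unfolded $q$-channel feature map — is exactly the step the paper also takes for granted, so including the short lemma you describe would make your write-up slightly more rigorous than the original.
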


\begin{proof}
Given an input tensor $\tX\in\sR^{c\times h\times w}$ with rank $r$, $\tW\in\sR^{n\times c\times k\times k}$, and output $\tY\in\sR^{n\times h\times w}$ with rank $q$, we can write the convolution $\tY = \tW \circledast \tX$ in a matrix form as 
\begin{align}
Y = W \cdot X,
\end{align}
where $W\in\sR^{n\times ck^2}$, $X\in\sR^{ck^2\times hw}$ \cite{img2col_2006,img2col_2017,legrace_PETS_2022}. Similarly for the low-dimensional convolution layer, we can write its output $\tY^{'}$ as follows 
\begin{align*}
Y^{'} = W^{(2)} \cdot W^{(1)} \cdot X,
\end{align*}
where $W^{(2)}\in\sR^{n\times q}$ and $W^{(1)}\in\sR^{q\times ck^2}$.

According to Theorem 2 in \cite{legrace_PETS_2022}, the output's rank is decided by the rank of $X$. Therefore, we can decompose $X$ using SVD as 
\begin{equation*}
    X = U \cdot V,
\end{equation*}
where  $U\in\sR^{ck^2 \times q}$ is a matrix with orthogonal columns, and $V\in\sR^{q \times hw}$ is a matrix with orthogonal rows. With the low-rank decomposition, we can express the difference between the outputs $Y$ and $Y^{'}$ as 
\begin{equation*}
\begin{split}
    Y - Y^{'} &= W \cdot U \cdot V - W^{(2)} \cdot W^{(1)} \cdot U \cdot V \\
    &= (W\cdot U - W^{(2)}\cdot W^{(1)} \cdot U) \cdot V.
\end{split}
\end{equation*}
Let $Z = W\cdot U - W^{(2)} \cdot W^{(1)} \cdot U$, owing to the low-rank of the matrix $U$, the original kernel matrix $W$ is reduced to a low-dimensional form. Therefore, we can express the kernel matrix $W$ as follows 
\begin{align}
W = W^U + R,
\end{align}
where $W^U = W \times UU^{*}$ is a low-rank matrix obtained based on the principal components of $U$, while $R$ is a residual matrix based on the principal components that are orthogonal to $U$. That is, $R \cdot U=0$. \\ Then the difference $Z = W\cdot U -W^{(2)}\cdot W^{(1)}\cdot U$ can be written as follows 
\begin{equation*}
\begin{split}
    Z &= \left [ (W^U + R) - W^{(2)} \cdot W^{(1)} \right ] \cdot U \\
      &= (W^U - W^{(2)}\cdot W^{(1)}) \cdot U.
\end{split}
\end{equation*}

Since $\mathtt{Rank}~(W^U) \leq q$, then there is a solution such that
\begin{equation*}
    \min_{W^1, W^2} \left \| W^U - W^{(2)}\cdot W^{(1)} \right \| = 0,
\end{equation*}
where the pair $W^{(1)}, W^{(2)}$ is one of the rank-$q$ decompositions of $W^U$.
Hence, $\min_{W^1, W^2} \left \| \tY - \tY^{'} \right \| = 0$.
\end{proof}

\subsection{Proof of Theorem \ref{th:privacy}}
\label{appx:privacy}
Next, we provide the proof of Theorem \ref{th:privacy}.
\begin{appxtheorem}
\method{} ensures that the perturbed residuals 
and operations in the public environment satisfy $(\epsilon,\delta)$-DP given noise $\tN \sim \mathcal N (0, 2C^2\cdot\log{(2/\delta') / \epsilon'} \cdot \tI)$ given sampling probability $p$, and $\epsilon=\log{(1 + p(e^{\epsilon'}-1))}, \delta=p\delta'$.
\end{appxtheorem}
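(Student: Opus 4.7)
The plan is to compose four standard differential-privacy building blocks: an $\ell_2$-sensitivity bound supplied by the normalization step, the analytic Gaussian mechanism, privacy amplification by Poisson subsampling, and the post-processing invariance of DP. Together these reproduce exactly the noise scale and the $(\epsilon,\delta)$-pair in the statement.

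First I would verify that the residual release has global $\ell_2$-sensitivity at most $C$. By Eq.~(\ref{eq:IRResNorm}) every per-record residual satisfies $\|\mathrm{IR}_{\text{res}}\|_F \le C$, so two neighboring datasets $\mathcal{D}, \mathcal{D}'$ produce residual tensors differing by at most $C$ in Frobenius norm, giving $\Delta_2 \le C$. This is the only place where the normalization step enters the argument; it is what allows the noise to be calibrated once and for all, independently of the actual magnitude of the raw residuals.

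Next I would apply the analytic Gaussian mechanism of Dwork--Roth: for a query of $\ell_2$-sensitivity $\Delta_2 \le C$ and any $\epsilon'\in(0,1)$, $\delta'\in(0,1)$, adding $\tN\sim\mathcal{N}(0,\sigma^2\tI)$ with $\sigma^2 = 2C^2\log(2/\delta')/\epsilon'$ releases the perturbed residual under $(\epsilon',\delta')$-DP. This is exactly the noise scale demanded by the theorem, so the one-shot release of \IRN{} is $(\epsilon',\delta')$-DP at the per-record level. I would then invoke the subsampling amplification lemma: if the distinguishing record is included in the training stream with probability $p$, the base $(\epsilon',\delta')$-DP mechanism is amplified to $(\log(1+p(e^{\epsilon'}-1)),\,p\delta')$-DP, which is precisely the stated $\epsilon$ and $\delta$. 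Finally I would close the argument by post-processing: the binary quantizer of Eq.~(\ref{eq:binquant}) is a deterministic map applied to the noisy residual, and every subsequent computation in the public environment --- training \MRS{}, computing its logits, caching it, feeding minibatches to the optimizer --- touches only the quantized cache. By the post-processing property of DP no additional privacy cost is incurred.

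The main subtlety, rather than a technical obstacle, is the multi-epoch reuse of the cached residuals. A naive analysis that re-drew fresh noise each epoch would require composition over $\mathrm{ep}_2$ passes and would blow $\epsilon$ up substantially. The key observation, flagged in the remark following the theorem, is that Algorithm~\ref{alg:method} perturbs and quantizes each residual \emph{exactly once} before Stage~2 begins; every subsequent gradient step on \MRS{} only reads from that fixed privatized cache. Thus the entirety of Stage~2 --- over any number of epochs, with any sampling or reshuffling policy --- is post-processing of a single $(\epsilon',\delta')$-DP release, so only the one-time Gaussian mechanism and the subsampling factor $p$ enter the final bound. I would state this cleanly at the end of the proof to justify the absence of any $\mathrm{ep}_2$-dependence in $\epsilon$.
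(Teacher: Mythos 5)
Your proposal is correct and follows essentially the same route as the paper's proof: bound the $\ell_2$-sensitivity by $C$ via the normalization in Eq.~(\ref{eq:IRResNorm}), invoke the Gaussian-mechanism guarantee for $\sigma^2 = 2C^2\log(2/\delta')/\epsilon'$, amplify by subsampling with probability $p$ to get $\epsilon=\log(1+p(e^{\epsilon'}-1))$, $\delta=p\delta'$, and treat quantization plus everything in the public environment as post-processing. Your closing observation that the one-shot perturbation of the cached residuals avoids composition over epochs is exactly the point the paper makes in the remark following the theorem, so nothing is missing.
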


\begin{proof}
The proof relies on Theorem 6 in \cite{AproximatedDP}, the subsampling Theorem in \cite{PrivacyAmp} (Theorem 9) (replicated in Theorem 3 and 4), and the post-processing rule of DP.

First, we show that the mechanism is $(\epsilon, \delta)$-DP if $\sigma^2 = 2C^2\cdot\log{(2/\delta') / \epsilon'}$. 
Since the mechanism of obtaining \IRN{} is a Gaussian mechanism, and the variance is $2C^2\cdot\log{(2/\delta') / \epsilon'}$, then it follows that the mechanism ensures $(\epsilon,\delta)$-DP by Theorem 3 and Theorem 4. 

The residual model \MRS{} in the public environment and outputs after \MRS{} are the post-processing of \IRN{}. Since the post-processing does not affect the DP budget \cite{DPFoundation}, any operation in the public environment has the same privacy budget as the Gaussian mechanism.  
\end{proof}

\begin{appxtheorem}
    (Theorem 6 in \cite{AproximatedDP}) Given noise with $\sigma^2=2C^2\cdot \log{(2/\delta)}/\epsilon$, the Gaussian mechanism is $(\epsilon, \delta)$-DP. 
\end{appxtheorem}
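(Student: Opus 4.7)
The plan is to prove the Gaussian mechanism guarantee via the standard privacy-loss-random-variable (PLRV) analysis, exploiting the fact that the PLRV of a Gaussian mechanism is itself Gaussian. First I would fix neighboring datasets $D, D'$ and write $\Delta = f(D) - f(D')$, where $\|\Delta\|_2 \le C$ by the sensitivity bound. The output $M(D) = f(D) + N$ with $N \sim \mathcal{N}(0, \sigma^2 I)$ has density $p_D$, and analogously $p_{D'}$ for $D'$. I would then set up the PLRV $L(y) = \ln(p_D(y)/p_{D'}(y))$ evaluated at $y \sim M(D)$.

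The first concrete step is to compute $L$ explicitly. Expanding the Gaussian densities gives
\[
L(y) \;=\; \frac{\|y-f(D')\|^2 - \|y-f(D)\|^2}{2\sigma^2},
\]
and substituting $y = f(D)+N$ simplifies this to $L = \langle N, \Delta\rangle/\sigma^2 + \|\Delta\|^2/(2\sigma^2)$. Since $\langle N, \Delta\rangle$ is a scalar Gaussian with mean $0$ and variance $\sigma^2\|\Delta\|^2$, the PLRV is itself Gaussian with mean $\mu = \|\Delta\|^2/(2\sigma^2)$ and variance $v = \|\Delta\|^2/\sigma^2$.

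I would then use the equivalent characterization of approximate DP that reduces $(\epsilon, \delta)$-DP to the tail bound $\Pr_{y \sim M(D)}[L(y) > \epsilon] \le \delta$, combined with the symmetric event obtained by swapping $D$ and $D'$. Standardising $L$ and applying the Gaussian tail bound $\Pr[Z > t] \le \exp(-t^2/2)$ yields
\[
\Pr[L > \epsilon] \;\le\; \exp\!\left(-\tfrac{1}{2}\bigl(\sigma\epsilon/\|\Delta\| - \|\Delta\|/(2\sigma)\bigr)^2\right).
\]
The worst case occurs at $\|\Delta\| = C$, and plugging in the claimed $\sigma^2$ together with algebraic simplification is intended to bound the right-hand side by $\delta$, completing the $(\epsilon,\delta)$-DP conclusion.

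The hard part will be pinning down the precise constant in $\sigma^2$. The naive subgaussian tail route typically yields the Dwork--Roth form $\sigma \ge C\sqrt{2\log(1.25/\delta)}/\epsilon$, in which $\sigma^2$ scales as $1/\epsilon^2$ rather than $1/\epsilon$, so recovering exactly the paper's expression requires either (i) invoking the analytical Gaussian mechanism of Balle and Wang, which replaces the crude subgaussian bound with the exact Gaussian CDF and inverts it to tighten the constant, or (ii) restricting to a regime of $\epsilon$ in which the looser bound is still sufficient. I would finish by verifying both one-sided tail inequalities, handling the $p_{D'}/p_D$ direction by symmetry with the roles of $D$ and $D'$ swapped, so that the full $(\epsilon,\delta)$-DP statement follows from the tail argument applied twice.
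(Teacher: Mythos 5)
The paper does not actually prove this statement: it is imported as a black box (``Theorem~6 in \cite{AproximatedDP}'') and only restated in the appendix for use inside the proof of Theorem~2, so there is no in-paper argument to match yours against. Your privacy-loss-random-variable skeleton is the standard and correct one: the computation $L=\langle N,\Delta\rangle/\sigma^{2}+\|\Delta\|^{2}/(2\sigma^{2})$, the observation that $L$ is Gaussian with variance equal to twice its mean, the reduction of $(\epsilon,\delta)$-DP to the two one-sided tail events, and the monotonicity argument placing the worst case at $\|\Delta\|=C$ are all right.

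The genuine gap is exactly the one you flag and then leave unresolved: the constant does not come out, and in fact it cannot for general $\epsilon$. Plugging $\sigma^{2}=2C^{2}\log(2/\delta)/\epsilon$ into your standardized threshold gives
\[
t \;=\; \frac{\epsilon\sigma}{C}-\frac{C}{2\sigma}\;=\;\sqrt{\epsilon}\left(\sqrt{2\log(2/\delta)}-\frac{1}{2\sqrt{2\log(2/\delta)}}\right),
\]
and the requirement $t^{2}/2\ge\log(1/\delta)$ is satisfied only when $\epsilon\gtrsim\log(1/\delta)/\log(2/\delta)$, i.e.\ roughly $\epsilon\ge 1$. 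For $\epsilon<1$ neither of your proposed rescues works: the Balle--Wang exact characterization shows the necessary noise scale there is $\sigma\asymp C/\epsilon$ up to logarithmic factors, which is strictly larger than the stated $\sigma\propto C/\sqrt{\epsilon}$, so no sharper analysis can certify the claim in that regime. The classical citation actually reads $\sigma^{2}=2C^{2}\log(2/\delta)/\epsilon^{2}$ (valid for $\epsilon<1$), and your own tail bound already delivers that version essentially verbatim. To make your proof complete you must either prove the $1/\epsilon^{2}$ form, or explicitly restrict the statement to the large-$\epsilon$ regime where the $1/\epsilon$ scaling is admissible; as written, the proposal ends before committing to either, so the theorem is not established.
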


\begin{appxtheorem}
    (Theorem 9 \cite{PrivacyAmp}) Given a randomized mechanism $\mathcal{M}'$ with privacy parameter $(\epsilon', \delta')$, and $\mathcal{M}$ with sampling probability $p$, for any $\epsilon' > 0, \delta' > 0$, we have $\epsilon = \log{(1 + p(e^{\epsilon'}-1))}$, and $\delta = p\delta'$. 
\end{appxtheorem}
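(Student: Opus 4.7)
The plan is to establish the amplified $(\epsilon, \delta)$-DP guarantee for the subsampled mechanism by coupling the two subsampling processes and reducing the analysis to the per-record DP guarantee of $\mathcal{M}'$. First I would fix neighboring datasets $D \sim D'$ under the add/remove model, so that WLOG $D' = D \cup \{x^*\}$ for a single extra record $x^*$. Since Poisson subsampling is independent per record, I can couple the two processes by drawing a single Poisson subsample $S \subseteq D$ (with per-record inclusion probability $p$) together with an independent Bernoulli coin $b$ satisfying $\Pr[b=1] = p$, and defining the subsample of $D'$ to be $S \cup \{x^*\}$ when $b = 1$ and $S$ when $b = 0$. The marginals of this coupling reproduce the intended Poisson-subsampling distributions on $D$ and $D'$ exactly.

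Given the coupling, I would express $\mathcal{M}(D')$ as a mixture over the Bernoulli coin: for any output event $E$,
\[
\Pr[\mathcal{M}(D') \in E] = (1-p)\,\E_S[\Pr[\mathcal{M}'(S) \in E]] + p\,\E_S[\Pr[\mathcal{M}'(S \cup \{x^*\}) \in E]],
\]
while $\Pr[\mathcal{M}(D) \in E] = \E_S[\Pr[\mathcal{M}'(S) \in E]]$. Since $S$ and $S \cup \{x^*\}$ are themselves neighboring for every realization of $S$, the $(\epsilon', \delta')$-DP of $\mathcal{M}'$ gives $\Pr[\mathcal{M}'(S \cup \{x^*\}) \in E] \leq e^{\epsilon'}\Pr[\mathcal{M}'(S) \in E] + \delta'$ pointwise. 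Substituting into the mixture and taking expectations collapses it to $\Pr[\mathcal{M}(D') \in E] \leq (1 + p(e^{\epsilon'}-1))\,\Pr[\mathcal{M}(D) \in E] + p\delta'$, which is exactly the $(\epsilon,\delta)$-DP bound for the claimed parameters $\epsilon = \log(1 + p(e^{\epsilon'}-1))$ and $\delta = p\delta'$.

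The reverse inequality $\Pr[\mathcal{M}(D) \in E] \leq e^\epsilon \Pr[\mathcal{M}(D') \in E] + \delta$ follows from the same mixture decomposition, now plugging in the opposite DP inequality $\Pr[\mathcal{M}'(S) \in E] \leq e^{\epsilon'} \Pr[\mathcal{M}'(S \cup \{x^*\}) \in E] + \delta'$ inside the expectation. The hard part will be verifying that the same $(\epsilon, \delta)$ parameters actually control this direction too: after rearrangement the check reduces to the algebraic inequality $(1 + p(e^{\epsilon'}-1))\bigl((1-p) + p e^{-\epsilon'}\bigr) \geq 1$ for all $p \in [0,1]$, which I would verify by noting that equality holds at the endpoints $p \in \{0, 1\}$ and that expanding the product yields a single non-negative excess term of the form $p(1-p)(e^{\epsilon'}-1)^2/e^{\epsilon'}$. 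The forward direction falls out almost mechanically from the mixture, so the genuine technical content is this symmetric amplification check: ensuring that the single amplified parameter $\log(1 + p(e^{\epsilon'}-1))$ is simultaneously strong enough to dominate the privacy-loss random variable in both directions.
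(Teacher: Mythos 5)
Your proof is correct. Note first that the paper does not actually prove this statement: it is quoted verbatim as Theorem~9 of the cited reference on privacy amplification by subsampling, so there is no in-paper argument to compare against. Your blind reconstruction is the standard (and essentially the original) proof: the Bernoulli coupling of the two Poisson subsamples, the mixture decomposition $\Pr[\mathcal{M}(D')\in E] = (1-p)\,\E_S[\Pr[\mathcal{M}'(S)\in E]] + p\,\E_S[\Pr[\mathcal{M}'(S\cup\{x^*\})\in E]]$, and the pointwise application of the $(\epsilon',\delta')$ guarantee to the neighboring pair $S$, $S\cup\{x^*\}$ all check out, and your algebra for the reverse direction is right: the product indeed expands to $1 + p(1-p)(e^{\epsilon'}-1)^2/e^{\epsilon'} \geq 1$. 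Two small points. First, in the reverse direction there is also an additive term to track beyond the coefficient of $q$: after substituting the lower bound $\Pr[\mathcal{M}'(S\cup\{x^*\})\in E] \geq e^{-\epsilon'}(\Pr[\mathcal{M}'(S)\in E]-\delta')$ you pick up a term $p\delta'(1 - e^{\epsilon-\epsilon'})$, which is nonnegative only because $e^{\epsilon} = 1+p(e^{\epsilon'}-1) \leq e^{\epsilon'}$; this is immediate but should be stated. Second, the theorem as quoted in the paper leaves the subsampling scheme and neighboring relation unspecified; your proof silently commits to Poisson subsampling under the add/remove relation, which is the setting in which the stated parameters $\bigl(\log(1+p(e^{\epsilon'}-1)),\,p\delta'\bigr)$ are exactly right, so this is a reasonable and arguably necessary disambiguation rather than a gap.
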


\section{Ablation Study}\label{appx:ablation}
In this section, we conduct three important ablation studies. The first one explores merging logits with scaling factor, whereas the second investigates more perturbation effects on the overall model performance and finally the third one investigates the effects of the binary quantization. 

\subsection{Logits Merging with Scaling}
\label{appx:ablation:logits}

In the main paper, we directly add the logits from \MM{} and \MRS{} to obtain the final prediction (See Figure \ref{fig:overview:delta}). 
In this appendix, we explore a different way of merging logits. 
Specifically, given logits vector from \MM{} and \MRS{}: $\bm{z}_{\text{main}}$, $\bm{z}_{\text{res}}$, we add a scaling coefficient $\alpha$ during merging as

\vspace{-3mm}

\begin{equation*}
    \bm{z}_{\text{tot}} = \bm{z}_{\text{main}} + \alpha \cdot \bm{z}_{\text{res}}.
\end{equation*}

The scaling factor controls the weight of \MRS{}'s prediction. Since \MRS{} only contains residual information, its prediction might conflict with \MM{} when residuals contain little information. 
With the scaling factor, potential conflicts between \MM{} and \MRS{} can be mitigated. 

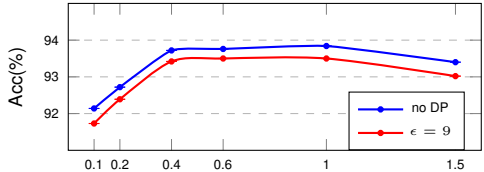
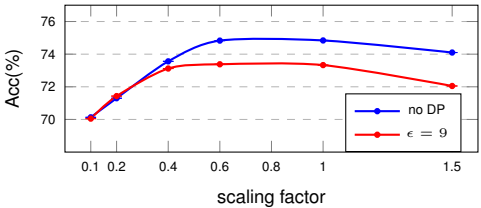
\begin{figure}[!htb]
    \centering
    \begin{subfigure}{.45\textwidth}
    \begin{tikzpicture}
        \begin{axis}[
            width=.9\linewidth, height=.45\linewidth,
            xmin=0, xmax=1.6,
            xtick={ 0.1, 0.2, 0.4, 0.6, 1, 1.5 },
            xticklabels={ 0.1, 0.2, 0.4, 0.6, 1, 1.5 },
            ymin=91, ymax=95,
            ylabel={\scriptsize Acc(\%)},
            ytick={92, 93, 94},
            yticklabels={92, 93, 94},
            ymajorgrids,
            major grid style={line width=.2pt, draw=black!30, dashed},
            extra tick style={major grid style=red, dashed},
            ticklabel style={font=\tiny},
            legend style={at={(0.96,0.4)}},
        ]
    
            \addplot+[ 
                thick, blue, smooth, tension=0.3, mark=*, mark options={solid, scale=0.4},
                error bars/.cd, y fixed, y dir=both, y explicit
            ] 
            table [x=x, y=y, y error plus=error1, y error minus=error2, col sep=comma]{
                x,    y,      error1,    error2
                0.1,    92.14,   0.0,      0.0
                0.2,    92.72,   0.0,      0.0
                0.4,    93.72,   0.0,      0.0
                0.6,    93.76,   0.0,      0.0
                1.0,    93.84,   0.0,      0.0
                1.5,    93.4,   0.0,      0.0
            };
            \addlegendentry{\tiny no DP}    

            \addplot+[ 
                thick, red, smooth, tension=0.5, mark=*, mark options={solid, scale=0.4},
                error bars/.cd, y fixed, y dir=both, y explicit
            ] 
            table [x=x, y=y, y error plus=error1, y error minus=error2, col sep=comma]{
                x,    y,      error1,    error2
                0.1,    91.73,   0.0,      0.0
                0.2,    92.39,   0.0,      0.0
                0.4,    93.42,   0.0,      0.0
                0.6,    93.5,   0.0,      0.0
                1.0,    93.50,   0.0,      0.0
                1.5,    93.02,   0.0,      0.0
            };
            \addlegendentry{\tiny $\epsilon=9$}  
        \end{axis}
    \end{tikzpicture}
    \caption{ResNet-18/CIFAR-10}
    \label{fig:ab:logits:cifar10}
    \end{subfigure}

    \vspace{2mm}
    
    \begin{subfigure}{.45\textwidth}
    \begin{tikzpicture}
        \begin{axis}[
            width=.9\linewidth, height=.45\linewidth,
            xmin=0, xmax=1.6,
            xlabel={\scriptsize scaling factor},
            xtick={ 0.1, 0.2, 0.4, 0.6, 0.8, 1, 1.5 },
            xticklabels={ 0.1, 0.2, 0.4, 0.6, 0.8, 1, 1.5 },
            ymin=68, ymax=77,
            ylabel={\scriptsize Acc(\%)},
            ytick={70, 72, 74, 76},
            yticklabels={70, 72, 74, 76},
            ymajorgrids,
            major grid style={line width=.2pt, draw=black!30, dashed},
            extra tick style={major grid style=red, dashed},
            ticklabel style={font=\tiny},
            legend style={at={(0.96,0.4)}},
        ]
    
            \addplot+[ 
                thick, blue, smooth, mark=*, mark options={solid, scale=0.4},
                error bars/.cd, y fixed, y dir=both, y explicit
            ] 
            table [x=x, y=y, y error plus=error1, y error minus=error2, col sep=comma]{
                x,    y,      error1,    error2
                0.1,    70.13,   0.0,      0.0
                0.2,    71.30,   0.0,      0.0
                0.4,    73.56,   0.0,      0.0
                0.6,    74.83,   0.0,      0.0
                1.0,    74.84,   0.0,      0.0
                1.5,    74.1,   0.0,      0.0
            };
            \addlegendentry{\tiny no DP}    

            \addplot+[ 
                thick, red, smooth, mark=*, mark options={solid, scale=0.4},
                error bars/.cd, y fixed, y dir=both, y explicit
            ] 
            table [x=x, y=y, y error plus=error1, y error minus=error2, col sep=comma]{
                x,    y,      error1,    error2
                0.1,    70.05,   0.0,      0.0
                0.2,    71.43,   0.0,      0.0
                0.4,    73.12,   0.0,      0.0
                0.6,    73.38,   0.0,      0.0
                1.0,    73.33,   0.0,      0.0
                1.5,    72.05,   0.0,      0.0
            };
            \addlegendentry{\tiny $\epsilon=9$}  
        \end{axis}
    \end{tikzpicture}
    \caption{ResNet-18/CIFAR-100}
    \label{fig:ab:logits:cifar100}
    \end{subfigure}
    \vspace{-2mm}
    \caption{\footnotesize Ablation study on merging logits with scaling. Logits merging with a small $\alpha$ limits the useful information from the residual model, incurring accuracy drops. Merging with a large $\alpha (> 1)$ also incurs accuracy drop as \MRS{} can overshadow \MM{} in the final prediction. 
    }
    \label{fig:ab:logits}
    \vspace{-4mm}
\end{figure}

Figure \ref{fig:ab:logits} shows the accuracy versus $\alpha$ for ResNet-18 on CIFAR-10/100. We next make the following observations.

\begin{itemize}
    \item With small $\alpha$, there is a noticeable accuracy drop. The reason is that small $\alpha$ reduces the weight of \MRS{}'s logits, limiting information from the residual path. As a result, \MRS{} barely provides performance improvements.

    \item As $\alpha$ increases, \MRS{}'s prediction weighs more, thereby boosting the performance of the overall model. We can also observe that there is a large adjustment space for $\alpha$, which leads to optimal performance. 

    \item With further large $\alpha$ ($\alpha >1$), \MRS{} becomes more and more dominant and dominates the main model's prediction. The overall performance again decreases. 
\end{itemize}

Therefore, the weight of \MRS{}'s logits affects the overall model performance. \method{} in the main paper assigns equal weights for \MM{} and \MRS{} ($\alpha=1$), which strikes an optimal balance between predictions from those two models. 

\subsection{More Effects of Perturbation}
\label{appx:ablation:residual}

In this ablation study, we investigate the potential adverse effects of the perturbed residuals on overall performance. As we add very large noise on residuals, information in residuals is significantly perturbed. As a result, the residual model \MRS{} can cause conflicts with the main model, rather than provide additional beneficial information for the final prediction. 

Figure \ref{fig:ab:residual} shows the accuracy of ResNet-18 on CIFAR-10/100 with small privacy budgets. With small $\epsilon$, the noise for perturbation is very large, thereby making the residual model \MRS{} unable to extract useful information from residuals. And it further affects the overall model accuracy. In particular, the final model accuracy can be even lower than without \MRS{} (red dashed line in Figure \ref{fig:ab:residual}).
Note that this ablation study mainly aims to investigate more effects of perturbation under very small $\epsilon$, but such strong privacy constraints are usually not considered in real scenarios. 
Furthermore, even in this case, with very strict privacy constraints, users can train \MB{} and \MM{} only. Owing to the effective asymmetric decomposition, \MB{} and \MM{} still give reasonable accuracy without incurring prohibitive costs in the private environment.  

\begin{figure}[!htb]
    \centering
    \begin{subfigure}{.45\textwidth}
    \centering
    \begin{tikzpicture}
        \begin{axis}[
            width=.9\linewidth, height=.4\linewidth,
            xmin=0, xmax=1.5,
            xtick={ 0.05, 0.3, 1, 1.4 },
            xticklabels={ 0.05, 0.3, 1, 1.4 },
            ymin=90, ymax=93,
            ylabel={\scriptsize Acc(\%)},
            ytick={90, 91, 92},
            yticklabels={90, 91, 92},
            ymajorgrids,
            major grid style={line width=.2pt, draw=black!30, dashed},
            extra y ticks={92},
            extra y tick labels={92},
            extra tick style={major grid style=red, dashed},
            ticklabel style={font=\tiny},
        ]
    
            \addplot+[ 
                thick, red, smooth, mark=*, mark options={solid, scale=0.4},
                error bars/.cd, y fixed, y dir=both, y explicit
            ] 
            table [x=x, y=y, y error plus=error1, y error minus=error2, col sep=comma]{
                x,    y,      error1,    error2
                0.05,    91.44,   0.0,      0.0
                0.3,    92.0,    0.0,      0.0
                1,    92.2,    0.0,      0.0
                1.4,      92.4,    0.0,      0.0
            };

            \node[fill=white, draw=black!30, rounded corners] (MB) at (axis cs:1.2,91.3){\tiny {$\mathcal{M}_{\text{bb,main}}$ only}};
            \draw[->] (MB.north west) -- +(-1mm, 1mm);
        \end{axis}
    \end{tikzpicture}
    \caption{ResNet-18/CIFAR-10}
    \label{fig:ab:residual:cifar10}
    \end{subfigure}

    \vspace{2mm}
    
    \begin{subfigure}{.45\textwidth}
    \centering
    \hspace{-2mm}
    \begin{tikzpicture}
        \begin{axis}[
            width=.9\linewidth, height=.4\linewidth,
            xmin=0, xmax=1.5,
            xlabel={\scriptsize $\epsilon$},
            xtick={ 0.05, 0.3, 1, 1.4 },
            xticklabels={ 0.05, 0.3, 1, 1.4 },
            ymin=66, ymax=73,
            ylabel={\scriptsize Acc(\%)},
            ytick={67, 71},
            yticklabels={67, 71},
            ymajorgrids,
            major grid style={line width=.2pt, draw=black!30, dashed},
            extra y ticks={69.8},
            extra y tick labels={69.8},
            extra tick style={major grid style=red, dashed},
            ticklabel style={font=\tiny},
        ]
    
            \addplot+[ 
                thick, red, smooth, mark=*, mark options={solid, scale=0.4},
                error bars/.cd, y fixed, y dir=both, y explicit
            ] 
            table [x=x, y=y, y error plus=error1, y error minus=error2, col sep=comma]{
                x,    y,      error1,    error2
                0.05,    68.19,   0.0,      0.0
                0.3,    69.05,   0.0,      0.0
                1,    70.03,   0.0,      0.0
                1.4,    71.4,    0.0,      0.0
            };

            \node[fill=white, draw=black!30, rounded corners] (MB) at (axis cs:1.2,68){\tiny {$\mathcal{M}_{\text{bb,main}}$ only}};
            \draw[->] (MB.north west) -- +(-1mm, 1mm);
        \end{axis}
    \end{tikzpicture}
    \caption{ResNet-18/CIFAR-100}
    \label{fig:ab:residual:cifar100}
    \end{subfigure}
    \vspace{-3mm}
    \caption{\footnotesize Effects of large perturbation on overall performance. \MRS{} under strict privacy constraints can result in an overall accuracy even lower than without \MRS{}. In this case, users can train \MB{} and \MM{} only, which achieves reasonable accuracy owing to the asymmetric decomposition.}
    \vspace{-4mm}
    \label{fig:ab:residual}
\end{figure}
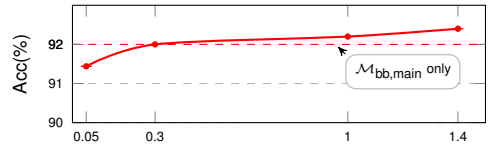
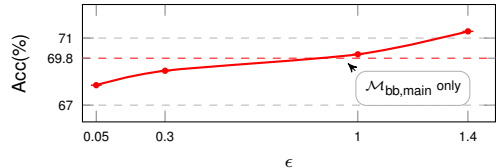

\subsection{Effects of Binary Quantization}
\label{appx:ablation:quantization}

This ablation study analyzes how the binary quantization affects the final model's performance. As elaborated in Section \ref{sec:method:quant}, the binary quantization reduces the communication cost when sending residuals to the public environment. \\
We train ResNet-18 on CIFAR-10 and CIFAR-100 with and without binary quantization, and report their results in Table \ref{tab:ablation:quant}. The results show that binary quantization does not significantly affect the accuracy of the final model under different privacy budgets.

\begin{table*}[!htb]
    \centering
    \begin{tabular}{c|cc|cc}
    \toprule
        & { CIFAR10:} $\epsilon=1.4$ & $\epsilon=\infty$ & { CIFAR100:} $\epsilon=1.4$ & $\epsilon=\infty$ \\ \midrule
        With Quantization & $92.4 \pm 0.3$ & $93.7 \pm 0.3$ & $71.4 \pm 0.1$ & $74.8 \pm 0.3$ \\
        No Quantization & $92.8 \pm 0.2$ & $94 \pm 0.2$ & $72 \pm 0.4$ & $75.1 \pm 0.2$ \\
    \bottomrule
    \end{tabular}
    \vspace{-3mm}
    \caption{\footnotesize Ablation study w/ and w/o quantization with ResNet-18. This demonstartes that the quantization does not significantly reduce model performance.}
    \label{tab:ablation:quant}
    \vspace{-3mm}
\end{table*}

\section{Experiment for Asymmetric Structure}
In this appendix, we provide the experimental details for the asymmetric structure of the IRs in Section \ref{sec:motivation}. \\
We train ResNet-18 with ImageNet  and the hyperparameters listed in Table \ref{tab:hparam:motivation}. When the training is complete, we analyze the asymmetric structures in the validation dataset. \\
Specifically, we extract the intermediate features after the first convolutional layer in ResNet-18, then use SVD and DCT to analyze the channel and spatial correlation as in Section \ref{sec:motivation}.
For DCT, since the feature size after the convolutional layer is $56\times 56$, we use $14\times 14$ block-wise DCT. \\
We compute the relative error $\frac{\left \| \tX - \tX_{\textrm{lr}} \right \|}{ \left \| \tX \right \| }$ and $\frac{\left \| \tX - \tX_{\text{lf}} \right \|}{ \left \| \tX \right \| }$ for each input, and average the ratio across the entire validation datasets.
By varying the number of principal channels in $\tX_{\text{lr}}$, $r$, and the number of low-frequency components in $\etX_{\text{lf}}$, $t'^2$, we obtain the results in Figure \ref{fig:motivation}.

\begin{table}[!htb]
    \centering
    \caption{\footnotesize Hyperparameters in investigating asymmetric structures in CNNs}
    \label{tab:hparam:motivation}
    \small
    \begin{tabular}{c|c|c|c|c|c}
    \toprule
       batch size & epochs & lr & wd & momem. & lr scheduler \\
    \midrule
       128 & 100 & 0.1 & 1e-4 & 0.9 & cosine anneal  \\
    \bottomrule
    \end{tabular}
    \vspace{-3mm}
\end{table}

\subsection{Asymmetric Structure in Language Models}
\label{appx:asymmetricNLP}
The asymmetric structure of the intermediate representations is not only observed in computer vision models but also in language models.
In this appendix, we show the asymmetric structure of word embedding vectors in language models. 

We use Word2Vec \cite{Word2Vec_2013_arXiv} and the word embedding layer in BERT \cite{BERT_2018_arXiv} to generate embedding vectors. 
Figure \ref{fig:lm:origtext} shows an example text with 80 words. We feed the text to Word2Vec and BERT word embedding layer, obtaining embedding vectors. 
We group the embedding vectors as a matrix with each one stored in one row.
With the embedding matrix obtained from Word2Vec and BERT, we respectively apply SVD to the matrix and compute their singular values, as shown in Figure \ref{fig:lm:singularvalue}.
We can easily observe that the decay of the singular values follows an exponential manner for both Word2Vec and the BERT embedding layer, indicating high correlations among the embedding vectors. 
With the principal vector after SVD, we further use the first 16 principal vectors ($1/5$ of the total vectors) from Word2Vec embedding and reconstruct an approximated matrix, where each row approximates the original embedding vector.
Then, we reconstruct the text using Vec2Word, as shown in Figure \ref{fig:lm:origtext}.  
We observe that the approximated text is almost the same as the original one even with only $1/5$ principal vectors (difference highlighted in bold red). 

\begin{figure}[!htb]
\centering
\begin{subfigure}{.48\textwidth}
\centering
\begin{tikzpicture}
    \begin{axis}[
        xmin = -1, xmax = 80,
        ymin = 0, ymax = 11,
        xlabel = {\scriptsize Singular value indices},
        ylabel = {\scriptsize Singular values}, 
        xtick = { 0, 20, 40, 60, 80 },
        xticklabels = { \tiny 0, \tiny 20, \tiny 40, \tiny 60, \tiny 80 },
        ytick = { 0, 2, 4, 6, 8 },
        yticklabels = { \tiny 0, \tiny 2, \tiny 4, \tiny 6, \tiny 8 },
        width=0.95\textwidth, height=.4\textwidth,
        line width=0.1pt,
        ]

        \addplot[ 
        thick, blue,
        mark=*, mark options={solid, scale=0.2}] file [skip first] { ./data/s_bert.dat };

        \addplot[ 
        thick, red,
        mark=*, mark options={solid, scale=0.2}] file [skip first] { ./data/s_word2vec.dat };

        \legend{
            \tiny BERT, \tiny Word2Vec
        }
    \end{axis}
\end{tikzpicture}
\caption{\footnotesize Singular values in word embedding vectors from BERT and Word2Vec.}
\label{fig:lm:singularvalue}
\end{subfigure}

\vspace{2mm}

\begin{subfigure}{.49\textwidth}
\centering
\begin{tikzpicture}
    \node[rounded corners, draw=green, fill=green!10, text width=.95\textwidth]{\baselineskip=6pt \scriptsize Large Language Models are foundational machine learning models that use deep learning algorithms to process and understand natural language. These models are trained on massive amounts of text data to learn patterns and entity relationships in the language. Large Language Models can perform many types of language tasks, such as translating languages, analyzing sentiments, chatbot conversations, and more. They can understand complex textual data, identify entities and relationships between them, and generate new text that \textcolor{red}{\textbf{is}} coherent and grammatically accurate. \par};
\end{tikzpicture}
\caption{Original text}
\label{fig:lm:origtext}
\end{subfigure}

\begin{subfigure}{.49\textwidth}
\centering
\begin{tikzpicture}
    \node[rounded corners, draw=green, fill=green!10, text width=.95\textwidth]{\baselineskip=6pt \scriptsize Large Language Models are foundational machine learning models that use deep learning algorithms to process and understand natural language. These models are trained on massive amounts of text data to learn patterns and entity relationships in the language. Large Language Models can perform many types of language tasks, such as translating languages, analyzing sentiments, chatbot conversations, and more. They can understand complex textual data, identify entities and relationships between them, and generate new text that \textcolor{red}{\textbf{are}} coherent and grammatically accurate. \par};
\end{tikzpicture}
\caption{approximated text with $1/5$ principal vectors from Word2Vec.}
\label{fig:lm:approxtext}
\end{subfigure}
\caption{\footnotesize Asymmetric structure in language models. Embeddings in language models also have a highly asymmetric structure. An approximated text with only $1/5$ principal vectors is almost the same as the original.}
\vspace{-4mm}
\end{figure}

\section{Model and Training Details}
In this appendix, we provide the model architectures and the hyperparameters of the experiments presented in Section \ref{sec:eval}. 

\begin{table}[!htb]
    \centering
    \small
    \caption{\footnotesize Model parameters of \MM{} for ResNet-18 and ResNet-34 with 8 principal channels in \IRM{}.}
    \label{tab:model:lowdim}
    \begin{tabular}{cccc|cccc}
    \toprule
        \multicolumn{4}{c|}{ResNet-18} & \multicolumn{4}{c}{ResNet-34} \\ 
       \emph{Resblock}  & $n$ & $k$ & $q$ & \emph{Resblock}  & $n$ & $k$ & $q$ \\ \midrule
        1,2 & 64  & 3 & 16   & 1-3 & 64  & 3 & 16  \\
        3,4 & 256 & 3 & 32   & 4-9 & 256 & 3 & 32 \\
        5,6 & 512 & 3 & 64   & 10-11 & 512 & 3 & 64 \\
    \bottomrule
    \end{tabular}
    \vspace{-4mm}
\end{table}

\subsection{Model Architectures}
\label{appx:arch}
Since the backbone model \MB{} and high-dimensional model \MRS{} combined is just the original model, in this section, we omit their architecture details and only provide the architecture of \MM{}.

\vspace{-3mm}

\begin{figure}[!htb]
    \centering
    \begin{subfigure}{0.45\linewidth}
        \centering
        \begin{tikzpicture}[node distance=1cm, auto,]
            \node[above=10mm] (input) {\scriptsize Input (rank: $r$)};
            \node[conv, fill=green!10] (conv1) {\scriptsize Conv $k\times k$};
            \node[conv, fill=green!10, below=5mm of conv1] (conv2) {\scriptsize Conv $k\times k$};
            \node[port, below=5mm of conv2] (add) {\scriptsize +};
            \node[below=5mm of add] (output) {\scriptsize Output (rank: $q$)};
    
            \node[text width=12mm, left=0mm of conv1] {\tiny $n$ kernels};
            \node[text width=12mm, left=0mm of conv2] {\tiny $n$ kernels};
    
            \draw[->]    (input) -- (conv1);
            \draw[->]    (conv1) -- (conv2);
            \draw[->]    (conv2) -- (add);
            \draw[->]    (add) -- (output);
            \draw[->, rounded corners]    (input.south) |- +(10mm, -2mm) |- (add.east);
        \end{tikzpicture}
        \caption{Original \emph{Resblock}($n, k$)}
        \label{fig:model:arch1}
    \end{subfigure} \quad
    \begin{subfigure}{0.45\linewidth}
        \centering
        \begin{tikzpicture}[node distance=1cm, auto,]
            \node[above=6mm] (input) {\scriptsize Input (rank: $r$)};
            \node[conv, fill=red!10] (conv1) {\scriptsize Conv $k\times k$};
            \node[conv, fill=red!10, below=1mm of conv1] (conv1_1) {\scriptsize Conv $1\times 1$};
            \node[conv, fill=red!10, below=3mm of conv1_1] (conv2) {\scriptsize Conv $k\times k$};
            \node[conv, fill=red!10, below=1mm of conv2] (conv2_1) {\scriptsize Conv $1\times 1$};
            \node[port, below=3mm of conv2_1] (add) {\scriptsize +};
            \node[below=3mm of add] (output) {\scriptsize Output (rank: $q$)};
    
            \node[text width=12mm, left=0mm of conv1] {\tiny $q$ kernels};
            \node[text width=12mm, left=0mm of conv1_1] {\tiny $n$ kernels};
            \node[text width=12mm, left=0mm of conv2] {\tiny $q$ kernels};
            \node[text width=12mm, left=0mm of conv2_1] {\tiny $n$ kernels};
    
            \draw[->]    (input) -- (conv1);
            \draw[->]    (conv1) -- (conv1_1);
            \draw[->]    (conv1_1) -- (conv2);
            \draw[->]    (conv2) -- (conv2_1);
            \draw[->]    (conv2_1) -- (add);
            \draw[->]    (add) -- (output);
            \draw[->, rounded corners]    (input.south) -- ++(0mm, -1mm) -- +(10mm, 0mm) |- (add.east);
        \end{tikzpicture}
        \caption{Low-dim \emph{Resblock}($n, k, q$)}
        \label{fig:model:arch2}
    \end{subfigure}
    \caption{\footnotesize The original \emph{Resblock} and low-dimensional \emph{Resblock}. Non-linear activation functions and batchnorm are not shown for simplicity.}
    \label{fig:model:arch}
    \vspace{-4mm}
\end{figure}
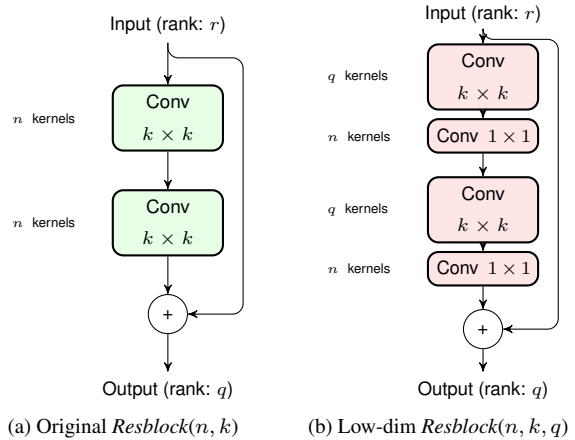
The original and low-dimensional \emph{Resblocks} are shown in Figure \ref{fig:model:arch}. The design of the low-dimensional \emph{Resblock}  follows Figure \ref{fig:model} and Theorem \ref{th:model}.

Table \ref{tab:model:lowdim} lists the details of model parameters of \MM{} for ResNet-18 and ResNet-34.
Given a \emph{Resblock} with $3\times 3$ kernels and input rank $r$, an output with rank $2r$ is sufficient to preserve most information in the principal channels \cite{legrace_PETS_2022}.
Hence, for ResNet-18 and ResNet-34, we let $q=2r$.

\subsection{Hyperparameters in the Main Experiments}\label{appx:exp:hparam}
For the privacy parameters, we set $\delta$ as \num{e-6} for all datasets. \\
Table \ref{tab:hparam:cifar} and \ref{tab:hparam:imagenet} list hyperparameters in training ResNet-18 on CIFAR-10/100, and ResNet-18/34 on ImageNet\footnote{The DCT block size is chosen by trading off DCT computation complexity and the effectiveness of the low-frequency approximation. DCT with too small blocks does not effectively extract the low-frequency components, whereas larger block sizes are computationally-intensive.}.
\begin{table}[!htb]
    \centering
    \small
    \caption{\footnotesize Hyperparameters in training ResNet-18 on CIFAR-10/100.}
    \label{tab:hparam:cifar}
    \begin{tabular}{ccccc|cc}
    \toprule
       epochs  & $b$ & $\mathrm{lr}$ & $\mathrm{wd}$ & \texttt{orth reg} & $r$ & $t/t'$ \\ \midrule
        150    & 64  & 0.1  & 2e-4 & 8e-4     & 8   & 16/8  \\
    \bottomrule
    \end{tabular} \\
    \footnotesize{ $b$: batch size, $\mathrm{lr}$: initial learning rate, $\mathrm{wd}$: weight decay. } \\
    \footnotesize{ $r$: \#principal channels in \IRM{}, $t/t'$: DCT/IDCT block sizes.} \\
    \footnotesize{ \texttt{orth reg}: kernel orthogonalization regularization.}
    \vspace{-5mm}
\end{table}

\begin{table}[!htb]
    \centering
    \small
    \caption{\footnotesize Hyperparameters in training ResNet-18/34 on ImageNet.}
    \label{tab:hparam:imagenet}
    \begin{tabular}{ccccc|cc}
    \toprule
       epochs  & $b$ & $\mathrm{lr}$ & $\mathrm{wd}$ & \texttt{orth reg} & $r$ & $t/t'$ \\ \midrule
        100    & 256 & 0.1  & 2e-5 & 0 & 12   & 14/7 \\
    \bottomrule
    \end{tabular}
    \vspace{-4mm}
\end{table}

\section{More Related Works}\label{appx:morerelated}
In addition to the prior privacy-preserving machine learning works mentioned in the main paper, there are other related works in the current literature.

\textbf{Split Learning.}  Split learning \cite{SplitLearing_2018_arXiv, SplitLearning_2021, SplitFed_2022_AAAI} is another training framework targeting data protection when sharing data with other parties. 
It splits and distributes a full model between private clients and untrusted public servers. During training, the clients learn a few front layers and send intermediate representations (rather than raw inputs) to the server, relieving computation and memory pressure on clients.

However, even the intermediate representations still contain substantial sensitive information about the raw input data, which gives way to adversaries who can infer training data, especially using a model inversion attack (Section \ref{subsec:modelinversion}). While split learning can be combined with DP to ensure privacy of the intermediate representations as in \cite{SplitFed_2022_AAAI}, unfortunately, this leads to a significant accuracy drop. 

\textbf{Crypto-based Private Learning.} Privacy-preserving machine learning enhanced by crypto techniques provides strong data protection \cite{DLFHE_NIPS2016, FHE_2017_Elsevier, FHE_2021_IEEE, Crypto_2018_PETS}. 
These approaches first encrypt the input data and directly train a model in the encrypted domain, preventing any untrusted parties from obtaining raw data. 
However, the encryption/decryption and bootstrapping \cite{Bootstrapping_2023} operations add tremendous complexities during training and inference, limiting their use for large-scale models.
Moreover, as non-linear activation functions are usually not supported by current encryption schemes, the crypto based solutions need to approximate these functions, which inevitably causes performance degradation. 

\textbf{TEE-based Private Learning.} Trusted execution environments (TEEs) provide a secure hardware enclave for sensitive data, which makes it a practical option for privacy-preserving machine learning \cite{TEE_2020_SP, SGX_2016, SecureTF_Mid2020, Capsule_CVPR2021, Saurav_arXiv_2020}.
TEE-based solutions encapsulate the data and the models in a secure environment and perform forward and backward passes. 
Throughout the whole process, the private information is always secured in TEEs. Therefore, such solutions achieve strong privacy protection for both data and the model. 
One critical concern of using TEEs for machine learning, however, is the relatively low computing performance compared to GPUs. Due to the low parallelism and low communication efficiency, training/inference time with TEEs differs from that with GPUs by a factor of about $100$ \cite{legrace_PETS_2022}. 
While the most advanced GPUs also come with trusted environment \cite{NvidiaTEE}, it still remains to see how this can be applied in real large-scale applications. 

\end{document}